\title{On Natural Deduction for Herbrand Constructive Logics II: Curry-Howard Correspondence for Markov's Principle in First-Order Logic and Arithmetic}
\titlerunning{Curry-Howard for Markov's Principle in First-Order Logic and Arithmetic}
\author{Federico Aschieri\footnote{Funded by the Austrian Science Fund FWF Lise Meitner grant M 1930--N35}}
\author{Matteo Manighetti\footnote{Funded by the Vienna Science Fund WWTF project VRG12-004}}
\affil{Institut f\"ur Diskrete Mathematik und Geometrie\\ Technische Universit\"at Wien\\ Wiedner Hauptstra\ss e 8-10/104, 1040, Vienna, Austria}
\authorrunning{F. Aschieri and M. Manighetti} 
\subjclass{F.4.1 Proof Theory}
\keywords{Markov's Principle, first-order logic, natural deduction, Curry-Howard}
\newcommand{\pair}[2]{\langle #1, #2 \rangle}
\newcommand{\proj}[2]{#2 \pi_{#1}}
\newcommand{\inj}[2]{\upiota_{#1}(#2)}
\newcommand{\case}[3]{#1[#2, #3]}
\newcommand{\enc}[2]{(#1, #2)}
\newcommand{\dest}[3]{#1[(\alpha,#2).#3]}
\newcommand{\Nat}                      { {\tt N} }
\newcommand{\marp}{\mathsf{MP}}
\newcommand{\HMP}{\mathsf{HMP}}
\newcommand{\IL}{\mathsf{IL}}
\newcommand{\HA}{\mathsf{HA}}
\newcommand{\EM}{\mathsf{EM}}
\newcommand{\MP}[1]{\mathcal{M}_{#1}}
\newcommand{\Ez}[3]{{ #2\, |_{#1}\, #3}}
\newcommand{\E}[3]{{ #2 \parallel_{#1} #3}}
\newcommand{\prop}[1]    {{\mathsf{#1}}} 
\newcommand{\emeno}[1]{\mathsf{EM}_{#1}^-}
\newcommand{\emme}[1]{\mathsf{EM}_{#1}}
\newcommand{\Hyp}[3]{{\mathtt{H}_{#1}^{\forall {#2} {#3}}}}
\newcommand{\Hypz}[1]{{\mathtt{H}^{#1}}}
\newcommand{\Wit}[3]{\mathtt{W}_{#1}^{\exists {#2} {#3} }}
\newcommand{\rec}{\mathsf{R}}
\newcommand{\suc}{\mathsf{S}}
\newcommand{\True}                     { {\texttt{True}} }
\newcommand{\False}                     { {\texttt{False}} }
\theoremstyle{plain}
\newtheorem{proposition}[theorem]{Proposition}
\begin{document}
\maketitle
 \begin{abstract}
 Intuitionistic first-order logic extended with a restricted form of Markov's principle is constructive and admits a Curry-Howard correspondence, as shown by Herbelin. We provide a simpler proof of that result and then we study intuitionistic first-order logic extended with unrestricted Markov's principle. Starting from classical natural deduction, we restrict  the excluded middle and we obtain a natural deduction system and a parallel Curry-Howard isomorphism for the logic. We show that proof terms for existentially quantified formulas reduce to a list of individual terms representing all possible witnesses. As corollary, we derive that the logic is Herbrand constructive: whenever it proves any existential formula, it proves also an Herbrand disjunction for the formula. Finally, using the techniques just introduced, we also provide a new computational interpretation of Arithmetic with Markov's principle.
\end{abstract}

\section{Introduction}

 Markov's Principle was introduced by Markov in the context of his theory of Constructive Recursive Mathematics (see \cite{Troelstra2}). Its original formulation is tied to Arithmetic: it states that given a recursive function $f: \mathbb{N} \rightarrow \mathbb{N}$, if it is impossible that for every natural number $n$, $f(n)\neq 0$, then there exists a $n$ such that $f(n)=0$.  Markov's original argument for justifying it was simply the following: if it is not possible that for all $n$, $f(n)\neq 0$, then by computing in sequence $f(0), f(1), f(2), \ldots$, one will eventually hit a number $n$ such that $f(n)=0$ and will \emph{effectively} recognize it as a witness. 
 
 Markov's principle is readily formalized in Heyting Arithmetic as the axiom scheme
$$\lnot\lnot \exists \alpha^{\Nat} {P}\rightarrow \exists \alpha^{\Nat} {P}$$ where ${P}$ is a primitive recursive predicate \cite{Troelstra}.  When added to Heyting Arithmetic, Markov's principle gives rise to a \emph{constructive} system, that is, one enjoying the disjunction and the existential witness property \cite{Troelstra} (if a disjunction is derivable, one of the disjoints is derivable too, and if an existential statement is derivable, so it is one instance of it). Furthermore, witnesses for any provable existential formula can be effectively computed using either Markov's unbounded search and Kleene's realizability \cite{Kleene} or much more efficient functional interpretations \cite{Godeldialectica, AZMarkov}.

\subsection{Markov's Principle in First-Order Logic}

The very shape of Markov's principle makes it also a purely logical principle, namely an instance of the double negation elimination axiom. But in pure logic, what exactly should Markov's principle correspond to? In particular, what class of formulas should $P$ be restricted to? Since Markov's principle was originally understood as a constructive principle, it is natural to restrict $P$ as little as possible, while maintaining the logical system as constructive as possible. As proven by Herbelin \cite{Herbelin}, it turns out that asking that $P$ is propositional and with no implication $\rightarrow$ symbols guarantees that intuitionistic logic extended with such a version of Markov's principle is constructive. The proof of this result employs a Curry-Howard isomorphism based on a mechanism for raising and catching exceptions. As opposed to the aforementioned functional interpretations of Markov's principle, Herbelin's calculus  is fully isomorphic to an intuitionistic logic: there is a perfect match between reduction steps at the level of programs and detour eliminations at the level of proofs. Moreover, witnesses for provable existential statements are computed by the associated proof terms. Nevertheless, as we shall later show, the mechanism of throwing exceptions plays no role during these computations: intuitionistic reductions are entirely enough for computing witnesses. 

A question is now naturally raised: as no special mechanism is required for witness computation using Herbelin's restriction of Markov's principle, can the first be further relaxed so that the second becomes stronger as well as computationally \emph{and} constructively meaningful? Allowing the propositional matrix $P$ to contain implication destroys the constructivity of the logic. It turns out, however, that \emph{Herbrand constructivity} is preserved.  An intermediate logic is called {Herbrand constructive} if it enjoys a strong form of Herbrand's Theorem \cite{Buss, AschieriZH}: for \emph{every} provable formula $\exists \alpha\, A$, the logic proves as well an Herbrand disjunction
 $$A[m_{1}/\alpha]\lor \ldots \lor A[m_{k}/\alpha]$$
 So the Markov principle we shall interpret in this paper is 
 $$\marp: \lnot\lnot \exists \alpha\, {P}\rightarrow \exists \alpha\, {P} \mbox{\qquad($P$ propositional formula)}$$
 and show that when added to intuitionistic first-order logic, the resulting system is Herbrand constructive. This is the most general form of Markov's principle that allows a significant constructive interpretation: we shall show how to non-trivially compute lists of witnesses for provable existential formulas  thanks to an exception raising construct and a parallel computation operator. $\marp$ can also be used in conjunction with negative translations to compute Herbrand disjunctions in classical logic, something which is not possible with Herbelin's form of Markov's principle.



\subsection{Restricted Excluded Middle}

The Curry-Howard correspondence we present here is by no means an ad hoc construction, only tailored for Markov's principle. It is a simple restriction of the Curry-Howard correspondence for classical first-order logic introduced in \cite{AschieriZH}, where classical reasoning is formalized by the excluded middle inference rule:
\begin{prooftree}
  \AxiomC{$\Gamma, a: \forall x \, \prop{Q} \vdash u: C$}
  \AxiomC{$\Gamma, a:  \exists x\, \neg \prop{Q} \vdash v: C$}
  \RightLabel{$\mathsf{EM}$}
  \BinaryInfC{$\Gamma\vdash  \E{a}{u}{v} : C$}
\end{prooftree}
It is enough to restrict the conclusion $C$ of this rule to be a simply existential statement and the $\prop{Q}$ in the premises $\forall x \, \prop{Q},  \exists x\, \neg \prop{Q}$ to be propositional.  We shall show that the rule is intuitionistically equivalent to  $\marp$. With our approach, strong normalization is just inherited and the transition from classical logic to intuitionistic logic with $\marp$ is smooth and natural. 

 \subsection{Markov's Principle in Arithmetic}
 
 We shall also provide a computational interpretation of Heyting Arithmetic with $\marp$. The system is constructive and witnesses for provable existential statements  can be computed. This time, we shall restrict the excluded middle as formalized in $\cite{ABB}$ and we shall directly obtain the desired Curry-Howard correspondence. As a matter of fact, the interpretation of $\marp$ in Arithmetic ends up to be a simplification of the methods we use in first-order logic, because the decidability of atomic formulas greatly reduces parallelism and eliminates case distinction on the truth of atomic formulas. 
 
 \subsection{Plan of the Paper}
 In \Cref{sec:herb}, we provide a simple computational interpretation of first-order intuitionistic logic extended with Herbelin's restriction of Markov's principle. We also show that the full Markov principle $\marp$ cannot be proved in that system. In \Cref{sec:ilemeno}, we provide a Curry-Howard correspondence for intuitionistic logic with $\marp$, by restricting the excluded middle, and show that the system is Herbrand constructive. In \Cref{sec:arithmetic}, we extend the Curry-Howard to Arithmetic with $\marp$ and show that the system becomes again constructive.

\section{Herbelin's Restriction of Markov's Principle}
\label{sec:herb}

In \cite{Herbelin}  Herbelin introduced a Curry-Howard isomorphism for an extended intuitionistic logic. By employing exception raising operators and new reduction rules, he proved that the logic is constructive and can derive the axiom scheme
$$\HMP: \neg \neg \exists \alpha \, P \to \exists \alpha \, P \text{\qquad($P$ propositional and $\rightarrow$  not occurring in $P$)}$$
Actually, Herbelin allowed $P$ also to contain existential quantifiers, but in that case the axiom scheme is intuitionistically equivalent to $  \neg \neg \exists \alpha_{1}\ldots \exists \alpha_{n} \, P \to \exists\alpha_{1}\ldots \exists \alpha_{n} \, P$, again with $P$ propositional and $\rightarrow$  not occurring in $P$. All of the methods of our paper apply to this case as well, but for avoiding trivial details, we keep the present $\HMP$.  

Our first goal is to show that $\HMP$
has a simpler computational interpretation and to provide a straightforward proof that, when added on top of first-order intuitionistic logic, $\HMP$ gives rise to a constructive system. In particular, we show that the ordinary Prawitz reduction rules for intuitionistic logic and thus the standard Curry-Howard isomorphism \cite{Sorensen}  are enough for extracting witnesses for provable existential formulas. The crucial insight, as we shall see, is that $\HMP$ can never  actually appear  in the head of a closed proof term having existential type. It thus plays no computational role in computing witnesses; it plays rather a logical role, in that it may be used to prove the correctness of the witnesses.

To achieve our goal, we consider the usual natural deduction system for intuitionistic first-order logic \cite{Prawitz, Sorensen}, to which we add $\HMP$.  Accordingly, we add to the associated lambda calculus the constants $\MP{P}: \neg \neg \exists \alpha \, P \to \exists \alpha \, P$. The resulting Curry-Howard system is called $\IL+\HMP$ and is presented in \cref{fig:system}. 
The reduction rules for $\IL+\HMP$ presented in \cref{fig:red-ilmp} are just the ordinary ones of lambda calculus. On the other hand, $\MP{P}$ has no computational content and thus no associated reduction rule. Of course, the strong normalization of $\IL+\HMP$  holds by virtue of the result for standard intuitionistic Curry-Howard.
\begin{theorem}
  The system $\IL+\HMP$ is strongly normalizing
\end{theorem}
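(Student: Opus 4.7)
The plan is to observe that $\HMP$ has been added in a purely \emph{inert} way: the constants $\MP{P}$ are given a type but no associated reduction rule, and the only reduction rules in \cref{fig:red-ilmp} are the ordinary $\beta$/projection/commuting rules of the standard intuitionistic Curry-Howard calculus. Hence every reduction of an $\IL+\HMP$ term is literally a reduction of a term of the underlying pure intuitionistic system (once we agree to view $\MP{P}$ as an atomic constant), and strong normalization will be inherited from the well-known strong normalization theorem for intuitionistic natural deduction / simply typed $\lambda$-calculus with first-order quantifiers.

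Concretely, I would proceed in two short steps. First, define a syntactic translation $(\cdot)^{*}$ from $\IL+\HMP$ terms to pure $\IL$ terms by fixing, for each propositional $P$, a fresh free variable $x_{P}$ of type $\neg\neg\exists\alpha\,P\rightarrow\exists\alpha\,P$ and setting $(\MP{P})^{*}:=x_{P}$, with $(\cdot)^{*}$ acting homomorphically on all other term constructors. Type preservation is immediate by induction, since $x_{P}$ has been chosen to carry exactly the type assigned to $\MP{P}$. Second, show that the translation is a simulation: whenever $t\rightsquigarrow t'$ in $\IL+\HMP$, the redex contracted is one of the standard intuitionistic redexes (no rule of \cref{fig:red-ilmp} fires on $\MP{P}$), so $t^{*}\rightsquigarrow (t')^{*}$ in $\IL$ in exactly one step of the same kind.

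With simulation in hand, any infinite reduction sequence $t_{0}\rightsquigarrow t_{1}\rightsquigarrow\cdots$ in $\IL+\HMP$ would give rise to an infinite reduction sequence $t_{0}^{*}\rightsquigarrow t_{1}^{*}\rightsquigarrow\cdots$ in pure $\IL$, contradicting the strong normalization theorem for the standard Curry-Howard correspondence of first-order intuitionistic natural deduction \cite{Prawitz, Sorensen}. Therefore $\IL+\HMP$ is strongly normalizing.

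There is essentially no technical obstacle here, precisely because $\HMP$ is introduced as a typed constant without any reduction rule of its own; the argument is a one-line appeal to the Prawitz/Tait strong normalization result, and the only care needed is to make the transport to the pure intuitionistic system explicit so that the inherited termination is visible. The real computational content of $\HMP$ will have to be analyzed later (indeed the authors announce that $\MP{P}$ never appears in head position of a closed existential proof term), but that is a separate analysis and is not needed for the strong normalization statement itself.
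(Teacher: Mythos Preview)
Your proposal is correct and takes essentially the same approach as the paper: the paper simply remarks that since $\MP{P}$ has no reduction rule and the remaining rules are the ordinary intuitionistic ones, strong normalization ``holds by virtue of the result for standard intuitionistic Curry-Howard,'' without spelling out any translation. Your explicit simulation via replacing each $\MP{P}$ by a fresh variable of the same type is a clean way to make that one-line appeal rigorous, but it is not a different argument---just a more careful rendering of the same inheritance.
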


 \begin{figure*}[!htb]
 
\footnotesize{

\begin{description}

\item[Grammar of Untyped Proof Terms]
\[t,u, v::=\ x\  |\ tu\ |\ tm\ |\ \lambda x\, u\  |\ \lambda \alpha\, u\ |\ \pair{t}{u}\ |\ \proj{0}{u}\ |\ \proj{1}{u} \ |\ \inj{0}{u}\ |\ \inj{1}{u}\  |\ \case{t}{x.u}{y.v}\ |\ (m,t)\ |\ t[(\alpha, x). u]\ |\ \Hypz{\prop{\bot\rightarrow P}} |\ \MP{P}\] 
where $m$ ranges over terms of the first-order language of formulas $\mathcal{L}$, $x$ over proof-term variables, $\alpha$ over first-order variables.

\item[Contexts] With $\Gamma$ we denote contexts of the form $x_1:A_1, \ldots, x_n:A_n$, where each $x_{i}$ is a proof-term variable, and $x_{i}\neq x_{j}$ for $i\neq j$.

\item[Axioms] 
$\begin{array}{c}   \Gamma, x:{A}\vdash x: A
\end{array}\ \ \ \ $
$\begin{array}{c}   \Gamma \vdash  \MP{P}: \neg \neg \exists \alpha \, \prop{P} \to \exists \alpha \, \prop{P}
\end{array}\ \ \ \ $
$\begin{array}{c}   \Gamma \vdash  \Hypz{\prop{\bot\rightarrow P}}:   \bot\rightarrow \prop{P}
\end{array}\ \ \ \ $ \\

\item[Conjunction] 
$\begin{array}{c}  \Gamma \vdash u:  A\ \ \ \Gamma\vdash t: B\\ \hline \Gamma\vdash \pair{u}{t}:
A\wedge B
\end{array}\ \ \ \ $
$\begin{array}{c} \Gamma \vdash u: A\wedge B\\ \hline\Gamma \vdash \proj{0}{u}: A
\end{array}\ \ \ \ $
$\begin{array}{c}  \Gamma \vdash u: A\wedge B\\ \hline \Gamma\vdash \proj{1}{u} : B
\end{array}$\\\\

\item[Implication] 
$\begin{array}{c}  \Gamma\vdash t: A\rightarrow B\ \ \ \Gamma\vdash u:A \\ \hline
\Gamma\vdash tu:B
\end{array}\ \ \ \ $
$\begin{array}{c}  \Gamma, x:A \vdash u: B\\ \hline \Gamma\vdash \lambda x\, u:
A\rightarrow B
\end{array}$\\\\
\item[Disjunction Introduction] 
$\begin{array}{c}  \Gamma \vdash u: A\\ \hline \Gamma\vdash \inj{0}{u}: A\vee B
\end{array}\ \ \ \ $
$\begin{array}{c}  \Gamma \vdash u: B\\ \hline \Gamma\vdash\inj{1}{u}: A\vee B
\end{array}$\\\\

\item[Disjunction Elimination] $\begin{array}{c} \Gamma\vdash u: A\vee B\ \ \ \Gamma, x: A \vdash w_1: C\ \ \ \Gamma, y:B\vdash w_2:
C\\ \hline \Gamma\vdash  u\, [x.w_{1}, y.w_{2}]: C
\end{array}$\\\\

\item[Universal Quantification] 
$\begin{array}{c} \Gamma \vdash u:\forall \alpha A\\ \hline  \Gamma\vdash um: A[m/\alpha]
\end{array} $
$\begin{array}{c}  \Gamma \vdash u: A\\ \hline \Gamma\vdash \lambda \alpha\, u:
\forall \alpha A
\end{array}$\\

where $m$ is any term of  the language $\mathcal{L}$ and $\alpha$ does not occur
free in any formula $B$ occurring in $\Gamma$.\\

\item[Existential Quantification] 
$\begin{array}{c}\Gamma\vdash  u: A[m/\alpha]\\ \hline \Gamma\vdash (
m,u):
\exists
\alpha A
\end{array}$ \ \ \ \
$\begin{array}{c} \Gamma\vdash u: \exists \alpha A\ \ \ \Gamma, x: A \vdash t:C\\
\hline
\Gamma\vdash u\, [(\alpha, x). t]: C
\end{array} $\\

where $\alpha$ is not free in $C$
nor in any formula $B$ occurring in $\Gamma$.\\



\end{description}
}

\caption{Term Assignment Rules for $\IL+\HMP$}\label{fig:system}

\begin{description}
 \item[Reduction Rules for $\IL$]
 \[(\lambda x. u)t\mapsto u[t/x]\]
 \[ (\lambda \alpha. u)m\mapsto u[m/\alpha]\]
  \[ \proj{i}{\pair{u_0}{u_1}}\mapsto u_i, \mbox{ for i=0,1}\]
 \[\case{\inj{i}{u}}{x_{1}.t_{1}}{x_{2}.t_{2}}\mapsto t_{i}[u/x_{i}], \mbox{ for i=0,1} \]
 \[\dest{\enc{m}{u}}{x}{v} \mapsto v[m/\alpha][u/x], \mbox{ for each term $m$ of $\mathcal{L}$} \]
\end{description}
\caption{Reduction Rules for $\IL$ + $\HMP$}\label{fig:red-ilmp}
\end{figure*}

As we shall see in \Cref{thm:construct-il-mp}, the reason why $\HMP$ cannot be appear in the head of a closed proof term having existential type is that its premise $\lnot\lnot\exists \alpha\, P$ is never classically valid, let alone provable in intuitionistic logic. \begin{proposition}
  \label{theorem:no-sigma-taut}
  Assume that the symbol $\rightarrow$ does not occur in the propositional formula $P$. Then $\lnot\lnot\exists \alpha\, P$ is not classically provable.\end{proposition}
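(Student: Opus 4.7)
The plan is to exhibit a classical counter-model for $\exists \alpha\, P$. Since classical logic validates double negation elimination, $\lnot\lnot \exists \alpha\, P$ is equivalent to $\exists \alpha\, P$; thus classical soundness will immediately give that $\lnot\lnot \exists \alpha\, P$ is not classically provable.

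Concretely, I fix a first-order structure $\mathcal{M}$ on a non-empty domain, interpreting every function symbol arbitrarily (say, by constants) and every predicate symbol as the empty relation. The main step is a routine induction on the shape of $P$ showing that, under any assignment of values to the free variables, $P$ is false in $\mathcal{M}$. Since $\rightarrow$ is forbidden in $P$, the only cases to consider are: atomic formulas (false by construction), $\bot$ (always false), and $A \land B$ or $A \lor B$ (false by induction hypothesis, since conjunctions and disjunctions of false formulas are false). Consequently $P[d/\alpha]$ is false in $\mathcal{M}$ for every $d$ in the domain, so $\exists \alpha\, P$ is false in $\mathcal{M}$, and hence so is $\lnot\lnot \exists \alpha\, P$.

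The hypothesis that $\rightarrow$ does not occur in $P$ plays its crucial role precisely in this induction: if implications were allowed, a subformula $A \rightarrow B$ with $A$ false would become vacuously true, and the uniform ``all false'' valuation would collapse. I do not expect any genuine obstacle beyond this structural induction; the only delicate point, should the language include equality, is to take the domain of $\mathcal{M}$ large enough and to choose the assignment of the free variables of $P$ so that every equality atom in $P$ still evaluates to false, which is easy since $P$ is quantifier-free and the equality atoms it contains are finitely many.
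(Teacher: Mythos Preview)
Your proposal is correct and follows essentially the same approach as the paper: both arguments exhibit the structure in which every predicate symbol is interpreted as the empty relation, and both prove by structural induction on the $\rightarrow$-free propositional formula $P$ that $P$ evaluates to false under every assignment, whence $\exists\alpha\,P$ and $\lnot\lnot\exists\alpha\,P$ fail in that model. Your treatment is in fact slightly more careful than the paper's, since you explicitly handle the $\bot$ case and flag the potential wrinkle with equality, neither of which the paper mentions.
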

\begin{proof}
  We provide a semantical argument. $\lnot\lnot\exists \alpha\, P$ is classically provable if and only if it is classically valid and thus if and only if $\exists \alpha\, P$ is classically valid. For every such a  formula, we shall exhibit a model falsifying it. Consider the model $\mathfrak{M}$ where every $n$-ary predicate is interpreted as the empty $n$-ary relation.  We show by induction on the complexity of the formula $P$ that $P^\mathfrak{M} = \bot$ for every assignment of individuals to the free variables of $P$, and therefore $(\exists \alpha \, P)^\mathfrak{M} = \bot$.
  \begin{itemize}
  \item If $P$ is atomic, then by definition of $\mathfrak{M}$, we have $P^\mathfrak{M} = \bot$ for every assignment of the variables.
  \item If $P=P_1 \land P_2$, then since by induction $P_1^\mathfrak{M} = \bot$, $(P_1\land P_2)^\mathfrak{M} = \bot$
  \item If $P=P_1 \lor P_2$, then since by induction $P_1^\mathfrak{M} = \bot$ and $P_2^\mathfrak{M} = \bot$, $(P_1\lor P_2)^\mathfrak{M} = \bot$
  \end{itemize}
\end{proof}
 In order to derive constructivity of $\IL+\HMP$, we shall just have to inspect the normal forms of proof terms. Our main argument, in particular, will use the following well-known syntactic characterization of the shape of proof terms.
\begin{proposition}[Head of a Proof Term]
  \label{theorem:head-form}
  Every proof-term of $\IL+\HMP$ is of the form 
\[\lambda z_1 \dots \lambda z_n.\, r u_1 \dots u_k \]
where 
\begin{itemize}
\item $r$ is either a variable or a constant or a term corresponding to an introduction rule: $\lambda x . t$, $\lambda \alpha . t$, $\pair{t_1}{t_2}$, $\inj{i}{t}$, $\enc{m}{t}$
\item $u_1, \dots u_k$ are either proof terms, first order terms, or one of the following expressions corresponding to elimination rules: $\proj{i}{}$, $\case{}{x.w_1}{y.w_2}$, $\dest{}{x}{t}$.
\end{itemize}
\end{proposition}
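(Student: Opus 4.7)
The plan is to proceed by straightforward structural induction on the untyped proof term $t$, using as induction hypothesis exactly the statement of the proposition: that $t$ can be written as $\lambda z_1 \dots \lambda z_n.\, r u_1 \dots u_k$ for some (possibly empty) prefix of binders, some head $r$ drawn from the allowed list, and some sequence of arguments drawn from the allowed list.

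For the atomic cases — when $t$ is a variable $x$ or one of the constants $\MP{P}$, $\Hypz{\prop{\bot\rightarrow P}}$ — I would take $n = k = 0$ and $r = t$ itself. The introduction forms at the root, namely $\pair{u_1}{u_2}$, $\inj{i}{u}$, and $\enc{m}{u}$, admit the same trivial decomposition with $r = t$ and empty binder and argument lists. For the two binder cases $\lambda x. u$ and $\lambda \alpha. u$, I would apply the induction hypothesis to $u$ to get $u = \lambda z_1 \dots \lambda z_n.\, r u_1 \dots u_k$, and then prepend $x$ or $\alpha$ to the list of leading binders.

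The substantive cases are the applications and the three eliminators. For $t = vu$ or $t = vm$, I would apply the induction hypothesis to $v$, obtaining $v = \lambda z_1 \dots \lambda z_n.\, r u_1 \dots u_k$, and then distinguish two sub-cases. If $n = 0$, the decomposition of $t$ is obtained by appending the new argument to the list $u_1, \dots, u_k$, keeping the same head $r$. If instead $n \geq 1$, then $v$ is itself a lambda abstraction — and thus a legitimate introduction-form head — so I would set $n_t = 0$, $r_t = v$, and let the sole argument be $u$ (or $m$). Exactly the same pattern handles the eliminators $\proj{i}{v}$, $\case{v}{x.w_1}{y.w_2}$, and $\dest{v}{x}{w}$: one recurses on the principal subterm $v$ and either extends its argument list by the postfix elimination expression $\pi_i$, $[x.w_1, y.w_2]$, or $[(\alpha,x).w]$, or else, when $v$ already begins with a lambda, takes $v$ as the head with the elimination expression as its unique argument.

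There is no real obstacle here: the result is essentially a bookkeeping observation about the grammar, requiring no reduction or typing reasoning. The only mild subtlety is the case split just described, but it is immediate once one observes that a lambda abstraction qualifies as a valid $r$ according to the permitted list of introduction forms, so both horns of the split land inside the stated normal shape.
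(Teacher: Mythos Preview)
Your argument is correct and is exactly the routine structural induction the paper has in mind; the paper itself dismisses the proof with the single word ``Standard'' and gives no further detail, so your write-up is, if anything, more explicit than the original.
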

\begin{proof}
Standard. 
\end{proof}
We are now able to prove that $\IL+\HMP$ is constructive.

\begin{theorem}[Constructivity of $\IL+\HMP$]
\label{thm:construct-il-mp}\mbox{}
\begin{enumerate}
\item
If $\IL+\HMP \vdash t: \exists \alpha \, A$, and $t$ is in normal form, then $t=\enc{m}{u}$ and $\IL+\HMP \vdash u: A[m/\alpha]$.
\item If $\IL+\HMP \vdash t: A \lor B$ and $t$ is in normal form, then either $t=\inj{0}{u}$ and $\IL+\HMP \vdash u: A$ or $t=\inj{1}{u}$ and $\IL+\HMP\vdash u: B$.
\end{enumerate}
\end{theorem}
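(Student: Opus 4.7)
The plan is to combine the head-form characterization (Proposition \ref{theorem:head-form}) with Proposition \ref{theorem:no-sigma-taut}, and to exploit the soundness of $\IL+\HMP$ with respect to classical logic. Since $t$ is typed in the empty context and its type $\exists \alpha\, A$ (respectively $A \lor B$) is neither implicational nor universal, in the decomposition $t = \lambda z_1 \dots \lambda z_n.\, r u_1 \dots u_k$ we must have $n = 0$. So $t = r u_1 \dots u_k$ and it remains to determine $r$ and $k$.

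I would then case-split on $r$. Since the context is empty, $r$ cannot be a variable. If $r$ is an introduction form, then $k = 0$: by the typing rules, any elimination $u_1$ that could be applied to $r$ would necessarily match the corresponding introduction of $r$, so $r u_1$ would constitute a redex, contradicting the assumption that $t$ is in normal form. Hence $t = r$, and by the type constraint the only introductions typable at $\exists \alpha\, A$ and at $A \lor B$ are $\enc{m}{u}$ and $\inj{i}{u}$, yielding exactly the two conclusions of the theorem.

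The main obstacle is to exclude the constants $\Hypz{\bot \to \prop{P}}$ and $\MP{\prop{P}}$ as heads. For this I would invoke soundness of $\IL+\HMP$ with respect to classical provability, which holds because $\HMP$ is itself a classical tautology and all remaining rules of the system are intuitionistic. If $r = \Hypz{\bot \to \prop{P}}$, typing forces $k \geq 1$ together with a closed term $u_1$ of type $\bot$, which is impossible since $\bot$ is not classically valid. If $r = \MP{\prop{P}}$, we would similarly need a closed $u_1$ of type $\lnot\lnot \exists \alpha\, \prop{P}$; but by Proposition \ref{theorem:no-sigma-taut} this formula is not even classically valid, so no such $u_1$ exists. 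This exhausts the possibilities for $r$ and completes both parts of the proof; crucially, the fact that $\HMP$ can never be the head of a closed normal proof term of existential type is precisely what makes the standard intuitionistic reduction rules sufficient to extract witnesses.
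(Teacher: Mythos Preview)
Your proposal is correct and follows essentially the same approach as the paper's own proof: both use the head-form characterization, exclude variables by closedness, exclude the constants $\Hypz{\bot\to\prop{P}}$ and $\MP{\prop{P}}$ via classical soundness together with Proposition~\ref{theorem:no-sigma-taut}, and conclude that $r$ must be an introduction with $k=0$ by normality. Your explicit remark that $n=0$ in the head decomposition is a detail the paper leaves implicit, but otherwise the arguments coincide.
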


\begin{proof}\mbox{}
\begin{enumerate}
\item  By \Cref{theorem:head-form}, $t$ must be of the form $r u_1\dots u_k$. Let us consider the possible forms of $r$.
  \begin{itemize}
  \item Since $t$ is closed, $r$ cannot be a variable.
  \item We show that $r$ cannot be  $\MP{P}$. If $r$ were $\MP{P} : \neg \neg \exists x \, P \to \exists \alpha \, P$ for some $P$, then $\IL+\mathsf{MP} \vdash u_1 : \neg \neg \exists \alpha \, P$. Since $\IL+\HMP$ is contained in classical logic, we have that $\neg \neg \exists \alpha \, P$ is classically provable.
However we know from \Cref{theorem:no-sigma-taut} that this cannot be the case, which is a contradiction.
  \item We also show that $r$ cannot be  $\Hypz{\prop{\bot\rightarrow P}}$.  Indeed, if $r$ were $\Hypz{\prop{\bot\rightarrow P}}$ for some $P$, then $\IL+\mathsf{MP} \vdash u_1 : \bot$, which is a contradiction.
  \item The only possibility is thus that $r$ is one among $\lambda x . t$, $\lambda \alpha . t$, $\pair{t_1}{t_2}$, $\inj{i}{t}$, $\enc{m}{t}$. In this case, $k$ must be 0 as otherwise we would have a redex. This means that $t=r$ 
   and thus $t=\enc{m}{u}$ with $\IL+\HMP \vdash u : A(m)$.
  \end{itemize}
  \item The proof goes along the same lines of case 1.
  \end{enumerate}
\end{proof}
Finally, we prove that $\IL+\HMP$ is not powerful enough to express full Markov's principle $\marp$. Intuitively, the reason is that $\IL+\HMP$ is a constructive system and thus cannot be strong enough to interpret classical reasoning. This  would indeed be the case if $\IL+\HMP$ proved $\marp$, an axiom which complements very well negative translations.
\begin{proposition}
  $\IL+\HMP\nvdash \marp$.
\end{proposition}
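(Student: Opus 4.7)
The plan is to exhibit a single instance of $\marp$ whose conclusion, together with the constructivity already proved in Theorem \ref{thm:construct-il-mp}, leads to a contradiction. I would choose $P(\alpha) := A(\alpha) \lor \lnot A(\alpha)$, where $A$ is an otherwise unconstrained unary predicate symbol. Since $\lnot A(\alpha)$ unfolds to $A(\alpha) \to \bot$, the matrix $P$ genuinely contains implication, so the corresponding instance of $\marp$ is really outside $\HMP$.

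The first step is to check that the premise $\lnot\lnot \exists \alpha\,(A(\alpha) \lor \lnot A(\alpha))$ is already derivable in plain intuitionistic first-order logic. This follows by combining the intuitionistic tautology $\lnot\lnot(A(\alpha) \lor \lnot A(\alpha))$ with the $\exists$-introduction $(A(\alpha) \lor \lnot A(\alpha)) \to \exists \beta\,(A(\beta) \lor \lnot A(\beta))$, using the standard fact that $\lnot\lnot X$ together with $X \to Y$ intuitionistically yields $\lnot\lnot Y$.

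Now suppose for contradiction that $\IL+\HMP \vdash \marp$. By modus ponens we obtain $\IL+\HMP \vdash \exists \alpha\,(A(\alpha) \lor \lnot A(\alpha))$. Normalising the associated proof term and applying Theorem \ref{thm:construct-il-mp}(1) yields a first-order term $m$ together with a normal proof $v$ of $A(m) \lor \lnot A(m)$; applying Theorem \ref{thm:construct-il-mp}(2) to $v$ then produces a proof in $\IL+\HMP$ of either $A(m)$ or $\lnot A(m)$.

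The only remaining step, and the minor obstacle of the argument, is to rule out both disjuncts. This is achieved by classical soundness: each constant $\MP{P}$ inhabits a classically valid axiom, so $\IL+\HMP$ is contained in classical first-order logic. However, the model interpreting $A$ as the empty relation refutes $A(m)$, and the model interpreting it as the full relation refutes $\lnot A(m)$, so neither formula is classically valid. This contradiction establishes $\IL+\HMP \nvdash \marp$.
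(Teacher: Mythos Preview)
Your proof is correct and takes a more direct route than the paper's. The paper argues that if $\IL+\HMP$ proved $\marp$, then via the G\"odel--Gentzen negative translation every classically provable simply existential $\exists\alpha\,\prop{P}$ would yield, through Theorem~\ref{thm:construct-il-mp}(1), a single provable instance $\prop{P}[m/\alpha]$; it then refutes this with the drinker-style formula $\exists\alpha\,\big((P(a)\lor P(b))\to P(\alpha)\big)$. You instead exhibit one concrete instance of $\marp$ whose premise $\neg\neg\exists\alpha\,(A(\alpha)\lor\neg A(\alpha))$ is already derivable in plain $\IL$, and then use \emph{both} parts of Theorem~\ref{thm:construct-il-mp} to extract a proof of $A(m)$ or of $\neg A(m)$, immediately contradicting classical soundness. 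Your argument is more elementary in that it avoids the negative-translation machinery altogether and needs only a one-symbol language; the paper's argument, by passing through the translation, has the side benefit of making explicit the broader point (exploited later in the paper) that $\marp$ is exactly what is needed to recover classical witness extraction from negative translations.
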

\begin{proof}
Suppose for the sake of contradiction that $\IL+\HMP\vdash \marp$.
Consider any proof in classical first-order logic of a simply existential statement $\exists \alpha\,\prop{P}$. By the G\"odel-Gentzen negative translation (see \cite{Troelstra}), we can then obtain an intuitionistic proof of $\neg \neg \exists \alpha \, \prop{P}^{N}$, where $\prop{P}^{N}$ is the negative translation of $\prop{P}$, and thus $\IL+\HMP\vdash\exists \alpha\, \prop{P}^{N}$. By \Cref{thm:construct-il-mp}, there is a first-order term $m$ such that $\IL+\HMP\vdash \prop{P}^{N}[m/\alpha]$. Since $\prop{P}^{N}[m/\alpha]$ is classically equivalent to  $\prop{P}[m/\alpha]$, we would have a single witness for every classically valid simply existential statement. But this is not possible: consider for example the first-order language $\mathcal{L}=\{P,a,b\}$ and the formula $F = (P(a) \lor P(b)) \to  P(\alpha)$ where $P$ is an atomic predicate. Then the formula $\exists \alpha \, F$ is  classically provable, but there is no term $m$ such that $F[m/\alpha]$ is valid, let alone provable:
\begin{itemize}
\item it cannot be $m=a$, as it is shown by picking a model where $P$ is interpreted as the set $\{a\}$
\item it cannot be $m=b$, because we can interpret $P$ as the set $\{b\}$.
\end{itemize}
\end{proof}

\section{Full Markov Principle and Restricted Excluded Middle in First-Order Logic}
\label{sec:ilemeno}

In this section we describe the natural deduction system and Curry-Howard correspondence  $\IL+\EM_{1}^{-}$, which arise by restricting the excluded-middle in classical natural deduction \cite{AschieriZH}. This computational system is based on delimited exceptions and a parallel operator. We will show that on one hand full Markov principle $\marp$ is provable in $\IL+\EM_{1}^{-}$ and, on the other hand, that $\IL+\marp$ derives all of the restricted classical reasoning that can be expressed in $\IL+\EM_{1}^{-}$, so that the two systems are actually equivalent. Finally, we show that the system $\IL+\EM_{1}^{-}$ is Herbrand constructive and that witnesses can effectively be computed.\\

All of the classical reasoning in $\IL+\EM_{1}^{-}$ is formally restricted to negative formulas.
\begin{definition}[Negative, Simply Universal Formulas]
  We denote propositional formulas as $\prop{P_1},\dots \prop{P_n},\prop{Q},\prop{R},\dots$. We say that a propositional formula is \emph{negative} whenever $\lor$ does not occur in it. Formulas of the form $\forall \alpha_{1}\ldots \forall \alpha_{n}\, \prop{P}$, with  $\prop{P}$ negative, will be called \emph{simply universal}.

\end{definition}

In order to computationally interpret Markov's principle, we consider the rule  $\EM_{1}^{-}$, which is obtained by restricting the conclusion of the excluded middle $\EM_{1}$ \cite{AschieriZH, ABB} to be a simply existential formula,  
\begin{prooftree}
  \AxiomC{$\Gamma, a: \forall \alpha \, \prop{P} \vdash u: \exists \beta \, \prop{Q}$}
  \AxiomC{$\Gamma, a:  \exists \alpha\, \neg \prop{P} \vdash v: \exists \beta \, \prop{Q}$}
  \RightLabel{$\emeno{1}$}
  \BinaryInfC{$\Gamma\vdash  \E{a}{u}{v} : \exists \beta \, \prop{Q}$}
\end{prooftree}
where both $\prop{P}$ and $\prop{Q}$ are negative formulas. This inference rule is complemented by the axioms:
\[\Gamma, a:{\forall \alpha \prop{P}}\vdash \Hyp{a}{\alpha}{\prop{P}}: \forall\alpha \prop{P}\]
\[\Gamma, a:{\exists \alpha \lnot \prop{P}}\vdash \Wit{a}{\alpha}{\neg \prop{P}}: \exists\alpha \lnot \prop{P}\]
These  last two rules correspond respectively to a term making an \emph{Hypothesis} and a term waiting for a \emph{Witness} and these terms are put in communication via $\emeno{1}$. A term of the form $\Hyp{a}{\alpha}{\prop{P}} m$, with $m$ first-order term, is said to be \emph{active}, if its only free variable is $a$: it represents a raise operator which has been turned on.   
The  term $\E{a}{u}{v}$ supports an exception mechanism: $u$ is the ordinary computation, $v$ is the exceptional one and $a$ is the communication channel. Raising exceptions is the task of the term $\Hyp{a}{\alpha}{\prop{P}}$, when it encounters a counterexample $m$ to $\forall\alpha\, \prop{P}$; catching exceptions is performed by the term $\Wit{a}{\alpha}{\neg \prop{P}}$.  In first-order logic, however, there is an issue: when should an exception be thrown? Since the truth of atomic predicates depends on models, one cannot know. Therefore, each time $\Hyp{a}{\alpha}{\prop{P}}$ is applied to a term $m$, a  new pair of parallel independent computational paths is created, according as to whether $\prop{P}[m/\alpha]$ is false or true. In one path the exception is thrown, in the other not, and the two computations  will never join again. To render this computational behaviour, we add the rule $\emme{0}$ of propositional excluded middle over negative formulas 
\begin{prooftree}
  \AxiomC{$\Gamma, a:  \neg \prop{P} \vdash u: A$}
  \AxiomC{$\Gamma, a:  \prop{P} \vdash v: A$}
  \RightLabel{$\emme{0}$}
  \BinaryInfC{$\Gamma\vdash  \Ez{}{u}{v} : A$}
\end{prooftree}
 even if in principle it is derivable from $\EM_{1}^{-}$; we also add the axiom
 \[\Gamma, a:\prop{P} \vdash \Hypz{\prop{P}}: \prop{P}\]
 We call the resulting system $\IL+\EM_1^-$ (\cref{fig:system-ilem}) and present its reduction rules in  \cref{fig:red}; they just form a restriction of the system $\IL+\EM$ described in \cite{AschieriZH}. The reduction rules are in \cref{fig:red} and are based on the following definition, which formalizes the raise and catch mechanism.

\begin{definition}[Exception Substitution]\label{def:witsub}
  \label{definition-witsub} Suppose $v$ is any proof term and $m$ is a term of $\mathcal{L}$. Then:
  \begin{enumerate}
  \item If every free occurrence of $a$ in $v$ is of the form $\Wit{a}{\alpha}{ \prop{P}}$, we define $$v[a:=m]$$
    as the term obtained from $v$ by replacing each subterm $\Wit{a}{\alpha}{\prop{P}}$ corresponding to a free occurrence of $a$ in $v$ by $(m, \Hypz{\prop{P}[m/\alpha]})$.

  \item If every free occurrence of $a$ in $v$ is of the form $\Hyp{a}{\alpha}{\prop{P}}$, we define $$v[a:=m]$$
    as the term obtained from $v$ by replacing  each subterm $\Hyp{a}{\alpha}{\prop{P}}m$ corresponding to a free occurrence of $a$ in $v$ by $\Hypz{\mathsf{P}[m/\alpha]}$.
  \end{enumerate}
\end{definition}

\begin{figure*}[!htb]
 
\footnotesize{


\begin{description}

\item[Grammar of Untyped Proof Terms]
\[t,u, v::=\ x\  |\ tu\ |\ tm\ |\ \lambda x\, u\  |\ \lambda \alpha\, u\ |\ \pair{t}{u}\ |\ \proj{0}{u}\ |\ \proj{1}{u} \ |\ \inj{0}{u}\ |\ \inj{1}{u}\  |\ \case{t}{x.u}{y.v}\ |\ (m,t)\ |\ t[(\alpha, x). u]\] \[|\ (\Ez{}{u}{v})\ |\ (\E{a}{u}{v})\ |\ \Hyp{a}{\alpha}{A}\ |\ \Wit{a}{\alpha}{\prop{P}}\ |\ \Hypz{\prop{P}} \]
where $m$ ranges over terms of $\mathcal{L}$, $x$ over proof-term variables, $\alpha$ over first-order variables,  $a$ over hypothesis variables, $A$ is either a negative formula  or simply universal formula with a negative propositional matrix, and $\prop{P}$ is negative.

We assume that in the term $\E{a}{u}{v}$ there is some formula $\prop{P}$, such that $a$ occurs free in $u$ only in subterms of the form $\Hyp{a}{\alpha}{\prop{P}}$ and $a$ occurs free in $v$ only in subterms of the form $\Wit{a}{\alpha}{\prop{P}}$, and the occurrences of the variables in $\prop{P}$ different from $\alpha$ are free in both $u$ and $v$.

\item[Contexts] With $\Gamma$ we denote contexts of the form $x_1:A_1, \ldots, x_n:A_n$, where each $x_{i}$ is a proof-term variable, and $x_{i}\neq x_{j}$ for $i\neq j$.

\item[Axioms] 
$\begin{array}{c}   \Gamma, x:{A}\vdash x: A
\end{array}\ \ \ \ $
$\begin{array}{c}   \Gamma, a:{\forall {\alpha}\, A}\vdash  \Hyp{a}{\alpha}{A}:  \forall{\alpha}\, A
\end{array}\ \ \ \ $
$\begin{array}{c}   \Gamma, a:{\exists \alpha\,  \prop{P}\vdash \Wit{a}{\alpha}{\prop{P}}:  \exists{\alpha}\,  \prop{P}}
\end{array}$\\
$\begin{array}{c}   \Gamma, a: \prop{P}\vdash  \Hypz{\prop{P}}:   \prop{P}
\end{array}\ \ \ \ $
$\begin{array}{c}   \Gamma \vdash  \Hypz{\prop{\bot\rightarrow P}}:   \bot\rightarrow \prop{P}
\end{array}\ \ \ \ $
\\

\item[Conjunction] 
$\begin{array}{c}  \Gamma \vdash u:  A\ \ \ \Gamma\vdash t: B\\ \hline \Gamma\vdash \pair{u}{t}:
A\wedge B
\end{array}\ \ \ \ $
$\begin{array}{c} \Gamma \vdash u: A\wedge B\\ \hline\Gamma \vdash \proj{0}{u}: A
\end{array}\ \ \ \ $
$\begin{array}{c}  \Gamma \vdash u: A\wedge B\\ \hline \Gamma\vdash \proj{1}{u} : B
\end{array}$\\\\

\item[Implication] 
$\begin{array}{c}  \Gamma\vdash t: A\rightarrow B\ \ \ \Gamma\vdash u:A \\ \hline
\Gamma\vdash tu:B
\end{array}\ \ \ \ $
$\begin{array}{c}  \Gamma, x:A \vdash u: B\\ \hline \Gamma\vdash \lambda x\, u:
A\rightarrow B
\end{array}$\\\\
\item[Disjunction Introduction] 
$\begin{array}{c}  \Gamma \vdash u: A\\ \hline \Gamma\vdash \inj{0}{u}: A\vee B
\end{array}\ \ \ \ $
$\begin{array}{c}  \Gamma \vdash u: B\\ \hline \Gamma\vdash\inj{1}{u}: A\vee B
\end{array}$\\\\

\item[Disjunction Elimination] $\begin{array}{c} \Gamma\vdash u: A\vee B\ \ \ \Gamma, x: A \vdash w_1: C\ \ \ \Gamma, y:B\vdash w_2:
C\\ \hline \Gamma\vdash  u\, [x.w_{1}, y.w_{2}]: C
\end{array}$\\\\

\item[Universal Quantification] 
$\begin{array}{c} \Gamma \vdash u:\forall \alpha A\\ \hline  \Gamma\vdash um: A[m/\alpha]
\end{array} $
$\begin{array}{c}  \Gamma \vdash u: A\\ \hline \Gamma\vdash \lambda \alpha\, u:
\forall \alpha A
\end{array}$\\

where $m$ is any term of  the language $\mathcal{L}$ and $\alpha$ does not occur
free in any formula $B$ occurring in $\Gamma$.\\

\item[Existential Quantification] 
$\begin{array}{c}\Gamma\vdash  u: A[m/\alpha]\\ \hline \Gamma\vdash (
m,u):
\exists
\alpha A
\end{array}$ \ \ \ \
$\begin{array}{c} \Gamma\vdash u: \exists \alpha A\ \ \ \Gamma, x: A \vdash t:C\\
\hline
\Gamma\vdash u\, [(\alpha, x). t]: C
\end{array} $\\

where $\alpha$ is not free in $C$
nor in any formula $B$ occurring in $\Gamma$.\\

\item[$\emme{0}$]$\begin{array}{c} \Gamma, a:  \neg \prop{P} \vdash u: C\ \ \ \ \Gamma, a:  \prop{P} \vdash v: C\\ \hline \Gamma\vdash  \Ez{}{u}{v} : C \end{array}\ (\prop{P}\text{ propositional and negative})$ 

\item[$\EM_1^-$]$\begin{array}{c} \Gamma, a: \forall \alpha \, \prop{P} \vdash u: \exists \beta \ \prop{Q} \ \ \ \ \Gamma, a:  \exists \alpha\, \neg \prop{P} \vdash v: \exists \beta \ \prop{Q} \\ \hline \Gamma\vdash  \E{a}{u}{v} : \exists \beta \ \prop{Q} \end{array} \ \text{(} \prop{P},\prop{Q} \text{ propositional and negative) }$

\end{description}
}


\caption{Term Assignment Rules for $\IL+\emeno{1}$}\label{fig:system-ilem}
\end{figure*}


\begin{figure*}[!htb]
\begin{description}
 \item[Reduction Rules for $\IL$]
 \[(\lambda x. u)t\mapsto u[t/x]\qquad (\lambda \alpha. u)m\mapsto u[m/\alpha]\]
  \[ \proj{i}{\pair{u_0}{u_1}}\mapsto u_i, \mbox{ for i=0,1}\]
 \[\case{\inj{i}{u}}{x_{1}.t_{1}}{x_{2}.t_{2}}\mapsto t_{i}[u/x_{i}], \mbox{ for i=0,1} \]
 \[\dest{\enc{m}{u}}{x}{v} \mapsto v[m/\alpha][u/x], \mbox{ for each term $m$ of $\mathcal{L}$} \]

 \item[Permutation Rules for $\emme{0}$]
 \[(\Ez{}{u}{v}) w \mapsto \Ez{}{uw}{vw} \]
 \[\proj{i}{(\Ez{}{u}{v})}  \mapsto \Ez{}{\proj{i}{u}}{\proj{i}{v}} \]
 \[\case{(\Ez{}{u}{v})}{x.w_{1}}{y.w_{2}} \mapsto \Ez{}{\case{u}{x.w_{1}}{y}{w_{2}}}{\case{v}{x.w_{1}}{y}{w_{2}}}\]
 \[\dest{(\Ez{}{u}{v})}{x}{w} \mapsto \Ez{}{\dest{u}{x}{w}}{\dest{v}{x}{w}}\]

 \item[Reduction Rules for $\emeno{1}$]
 \[\E{a}{u}{v}\mapsto u,\, \mbox{ if $a$ does not occur free in $u$ }\]
 \[\E{a}{u}{v}\mapsto \Ez{}{v [a:=m]}{(\E{a}{u [a:=m]}{v})},\mbox{ whenever $u$ has some \emph{active} subterm $\Hyp{a}{\alpha}{\prop{P}} m$}\]

\end{description}
\caption{Reduction Rules for $\IL$ + $\emeno{1}$}\label{fig:red}
\end{figure*}
As we anticipated, our system is capable of proving the  full Markov Principle  $\marp$ and thus its particular case $\HMP$.

\begin{proposition}[Derivability of $\marp$] \mbox{}
  $\IL + \EM_1^- \vdash \marp$
\end{proposition}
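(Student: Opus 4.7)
The plan is to derive $\marp$ by a single application of $\emeno{1}$ together with one use of $\emme{0}$, choosing the auxiliary formula of $\emeno{1}$ to be $\lnot \prop{P}$ (which is negative whenever $\prop{P}$ is) and the existential $\exists \alpha\, \prop{P}$ as its conclusion. I first $\lambda$-abstract on a fresh hypothesis $h : \lnot\lnot \exists \alpha\, \prop{P}$, so the remaining goal becomes $\exists \alpha\, \prop{P}$ with $h$ in the context. Then I invoke $\emeno{1}$ on a fresh communication channel $a$, splitting into $a : \forall \alpha\, \lnot \prop{P}$ in the left premise and $a : \exists \alpha\, \lnot\lnot \prop{P}$ in the right premise; the side conditions of $\emeno{1}$ are met because both $\lnot \prop{P}$ and $\prop{P}$ are negative propositional.

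On the left branch, the hypothesis $a : \forall \alpha\, \lnot \prop{P}$ lets me build a term of type $\lnot \exists \alpha\, \prop{P}$ by destructing an arbitrary $w : \exists \alpha\, \prop{P}$ as $(m, p)$ and returning $a m p : \bot$. Feeding this function to $h$ produces $\bot$, from which ex falso delivers the required $\exists \alpha\, \prop{P}$, so $h$ carries all the classical load in this branch and no further principle is needed. On the right branch, $a : \exists \alpha\, \lnot\lnot \prop{P}$ is destructed as $(\alpha, q)$ with $q : \lnot\lnot \prop{P}$, and it remains to turn $q$ into an actual proof of $\prop{P}$; this is where $\emme{0}$ on the negative formula $\prop{P}$ comes in, producing $\prop{P}$ in its positive sub-branch and, in its negative sub-branch $b : \lnot \prop{P}$, deriving $\bot$ via $q b$ and applying ex falso to get $\prop{P}$. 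Pairing the resulting $p$ with the witness $\alpha$ yields $(\alpha, p) : \exists \alpha\, \prop{P}$.

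Closing the two premises with $\emeno{1}$ and $\lambda$-abstracting over $h$ then produces a closed proof term of $\marp$. The main thing to check is that all the formulas plugged into $\emeno{1}$ and $\emme{0}$ along the way satisfy the negativity side conditions, which is immediate from the assumption that $\prop{P}$ is negative propositional. The exception-raising and parallel-computation machinery of the calculus is not really being exercised at this stage: $\marp$ is a purely logical consequence of the restricted classical rules, and no witness computation is being performed in the derivation itself.
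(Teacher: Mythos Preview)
Your derivation matches the paper's formal proof and is correct when $\prop{P}$ is negative. The gap is that $\marp$ is stated for \emph{arbitrary} propositional $\prop{P}$, whereas you rely on ``the assumption that $\prop{P}$ is negative propositional'' to discharge the side conditions of $\emeno{1}$ and $\emme{0}$. No such assumption is part of $\marp$: if $\prop{P}$ contains $\lor$, then neither $\prop{P}$ nor $\neg\prop{P}$ is negative, so your application of $\emeno{1}$ (whose conclusion must be $\exists\beta\,\prop{Q}$ with $\prop{Q}$ negative, and here $\prop{Q}=\prop{P}$) and your application of $\emme{0}$ (whose auxiliary formula must be negative) are both blocked.

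The paper handles this with a preliminary step you omit. From $\emme{0}$ one obtains $\prop{R}\lor\neg\prop{R}$ for every atomic $\prop{R}$, hence $\IL+\emeno{1}$ proves every classical propositional tautology, in particular $\prop{P}_1\lor\prop{P}_2 \leftrightarrow \neg(\neg\prop{P}_1\land\neg\prop{P}_2)$. It follows that every propositional $\prop{Q}$ is provably equivalent in $\IL+\emeno{1}$ to some negative $\prop{P}$, and therefore $(\neg\neg\exists\alpha\,\prop{Q}\to\exists\alpha\,\prop{Q})\leftrightarrow(\neg\neg\exists\alpha\,\prop{P}\to\exists\alpha\,\prop{P})$. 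Only after this reduction does the paper carry out exactly your argument. So your construction is the core of the proof, but as written it does not establish the full $\marp$; add the reduction-to-negative step and you recover the paper's proof.
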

\begin{proof}
First note that with the use of $\emme{0}$,  we obtain that $\IL+\EM_{1}^{-}\vdash P \lor \neg P$ for any atomic formula $P$. Therefore $\IL+\EM_{1}^{-}$ can prove any propositional tautology, and in particular  $\IL+\EM_{1}^{-}\vdash\prop{P} \lor \prop{Q} \leftrightarrow \neg (\neg \prop{P} \land \neg \prop{Q})$ for any propositional formulas $\prop{P}, \prop{Q}$, thus proving that each propositional formula is equivalent to a negative one.

Consider now any instance $\neg \neg \exists \alpha\, \prop{Q} \to \exists \alpha\, \prop{Q}$ of $\marp$. Thanks to the previous observation, we obtain
$$\IL+\EM_{1}^{-}\vdash \big(\neg \neg \exists \alpha\, \prop{Q} \to \exists \alpha\, \prop{Q}\big) \leftrightarrow  \big(\neg \neg \exists \alpha\, \prop{P} \to \exists \alpha\, \prop{P}\big)$$
 for some negative formula $\prop{P}$ logically equivalent to $\prop{Q}$. The following formal proof shows that $\IL+\emeno{1} \vdash\neg \neg \exists \alpha\, \prop{P} \to \exists \alpha\, \prop{P}$.
\begin{footnotesize}
    \begin{prooftree}
      \AxiomC{$[\neg \neg \exists \alpha \, \prop{P}]_{(2)}$}

      \AxiomC{$[\exists \alpha \, \prop{P}]_{(1)}$}

      \AxiomC{$[\forall \alpha \, \neg \prop{P}]_{\emeno{1}}$}
      \UnaryInfC{$\neg \prop{P}$} \AxiomC{$[P]_{\exists}$}
      \BinaryInfC{$\bot$}
  
      \RightLabel{$\exists$} \BinaryInfC{$\bot$} \RightLabel{$_{(1)}$}
      \UnaryInfC{$\neg \exists \alpha \, \prop{P}$} \BinaryInfC{$\bot$}

      \UnaryInfC{$\exists \alpha \, \prop{P}$}

      \AxiomC{$[\exists \alpha \, \neg \neg \prop{P}]_{\emeno{1}}$}

      \AxiomC{$[\prop{P}]_{\emme{0}}$}

      \insertBetweenHyps{\hskip 0pt}

      \AxiomC{$[\neg \neg \, \prop{P}]_{\exists}$}
      \AxiomC{$[\neg \prop{P}]_{\emme{0}}$}

      \BinaryInfC{$\bot$} \UnaryInfC{$\prop{P}$}
      \insertBetweenHyps{\hskip 0pt} 
      \RightLabel{$\emme{0}$} \BinaryInfC{$\prop{P}$}
      \RightLabel{$\exists$} \BinaryInfC{$\prop{P}$}
      \UnaryInfC{$\exists \alpha \, \prop{P}$} \insertBetweenHyps{\hskip
        -20pt} \RightLabel{$\emeno{1}$}
      \BinaryInfC{$\exists \alpha \, \prop{P}$}
      \RightLabel{$_{(2)}$}
      \UnaryInfC{$\neg\neg\exists \alpha \, \prop{P} \to \exists \alpha \,
        \prop{P}$}
    \end{prooftree}
  \end{footnotesize}
  Finally, this implies $\IL+\emeno{1} \vdash \neg \neg \exists \alpha\, \prop{Q} \to \exists \alpha\, \prop{Q}$.
\end{proof}

Conversely, everything which is provable within our system can be proven by means of first-order logic with full Markov principle.

\begin{theorem}
  If  $\IL+\emeno{1} \vdash F$, then $\IL + \marp \vdash F$.
\end{theorem}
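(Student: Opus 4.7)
The plan is to proceed by induction on the derivation in $\IL+\emeno{1}$, showing that every rule can be simulated in $\IL+\marp$. All ordinary intuitionistic rules translate identically, and the new axiom schemes $\Hyp{a}{\alpha}{A}$, $\Wit{a}{\alpha}{\prop{P}}$, and $\Hypz{\prop{P}}$ collapse to the variable rule (they type the same formula as the assumption $a$), while $\Hypz{\prop{\bot\to P}}$ is just ex falso, which is already intuitionistic. Hence only the two new inference rules $\emme{0}$ and $\emeno{1}$ require real work.

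For both of these, the main tool is an auxiliary lemma that $\IL+\marp$ proves double-negation elimination for every negative propositional formula (and, by combination with the intuitionistic fact $\neg\neg\forall\alpha\,\varphi\to\forall\alpha\,\neg\neg\varphi$, for every simply universal formula). The key case, atomic stability $\neg\neg A\to A$, is obtained by instantiating $\marp$ with a dummy variable: $\neg\neg\exists\alpha\, A\to\exists\alpha\, A$ yields $A$ from $\neg\neg A$ when $\alpha\notin A$. The remaining cases $\bot,\wedge,\to$ propagate this stability through the standard intuitionistic manipulations (using in particular $\neg\neg(B\to C)\to(B\to\neg\neg C)$). From the same device one also derives $\prop{P}\lor\neg\prop{P}$ for any propositional $\prop{P}$: the formula $\neg\neg\exists\alpha\,(\prop{P}\lor\neg\prop{P})$ is intuitionistically provable, so $\marp$ gives $\exists\alpha\,(\prop{P}\lor\neg\prop{P})$, whence $\prop{P}\lor\neg\prop{P}$. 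This immediately simulates $\emme{0}$ by disjunction elimination using the two premises $\lambda a.u:\neg\prop{P}\to C$ and $\lambda a.v:\prop{P}\to C$.

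The simulation of $\emeno{1}$ is the heart of the argument. Given $\lambda a.u:\forall\alpha\,\prop{P}\to\exists\beta\,\prop{Q}$ and $\lambda a.v:\exists\alpha\,\neg\prop{P}\to\exists\beta\,\prop{Q}$, we prove $\exists\beta\,\prop{Q}$ by applying $\marp$ to a derivation of $\neg\neg\exists\beta\,\prop{Q}$, built as follows: assume $\neg\exists\beta\,\prop{Q}$; contraposing against $\lambda a.v$ gives $\neg\exists\alpha\,\neg\prop{P}$, hence intuitionistically $\forall\alpha\,\neg\neg\prop{P}$; by the stability lemma applied to the negative propositional $\prop{P}$, this yields $\forall\alpha\,\prop{P}$; feeding it to $\lambda a.u$ produces $\exists\beta\,\prop{Q}$, contradicting the assumption. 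The step I expect to be most delicate is not any single derivation but the verification that the stability lemma really does cover every propositional negative matrix occurring in an $\emeno{1}$ use, i.e.\ that the inductive construction of $\neg\neg\prop{P}\to\prop{P}$ goes through uniformly in the free first-order variables so that it can sit under the outer $\forall\alpha$; this is routine but worth stating explicitly as a preliminary lemma before packaging the three simulations above into the induction that proves the theorem.
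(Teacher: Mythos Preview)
Your proposal is correct and follows essentially the same route as the paper: reduce the problem to simulating $\emme{0}$ and $\emeno{1}$ in $\IL+\marp$, handle $\emme{0}$ via the derivable propositional excluded middle, and handle $\emeno{1}$ by assuming $\neg\exists\beta\,\prop{Q}$, passing through $\forall\alpha\,\neg\neg\prop{P}$ and stability to reach $\forall\alpha\,\prop{P}$, and then applying $\marp$. One small simplification the paper exploits that you miss: the dummy-variable trick you use for atomic stability already works for \emph{any} propositional $\prop{P}$ (since $\marp$ allows an arbitrary propositional matrix and $\exists\alpha\,\prop{P}\leftrightarrow\prop{P}$ when $\alpha\notin\prop{P}$), so the structural induction over $\bot,\wedge,\to$ is unnecessary and the restriction to negative matrices plays no role in this direction of the equivalence.
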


\begin{proof}

We just need to show that $\IL + \marp$ can prove  the rules $\emeno{1}$ and $\emme{0}$. For the case of $\emme{0}$, note that $\IL+\marp\vdash\neg \neg \prop{P} \to \prop{P}$ for all propositional formulas $\prop{P}$, thanks to  $\marp$. Since for every propositional $\prop{Q}$ we have $\IL+\marp\vdash \lnot\lnot (\prop{Q}\lor \lnot \prop{Q})$, we obtain $\IL+\marp\vdash\prop{Q} \lor \neg \prop{Q}$, and therefore $\IL+\marp$ can prove $\emme{0}$ by mean of an ordinary disjunction elimination.

In the case of $\emeno{1}$, if we are given the proofs of
\AxiomC{$\forall \alpha\, \prop{P}$}
\noLine
\UnaryInfC{\vdots}
\noLine
\UnaryInfC{$\exists \alpha\, \prop{C}$}
\DisplayProof
and
\AxiomC{$\exists \alpha \neg \prop{P}$}
\noLine
\UnaryInfC{\vdots}
\noLine
\UnaryInfC{$\exists \alpha \prop{C}$}
\DisplayProof
in $\IL+\marp$,
 the following derivation shows a proof of $\exists \alpha\, \prop{C}$ in $\IL+\marp$.
\begin{prooftree}
  \AxiomC{$[\forall \alpha \prop{P}]_{(1)}$} \noLine
  \UnaryInfC{\vdots} \noLine \UnaryInfC{$\exists \alpha \prop{C}$}

  \AxiomC{$[\neg \exists \alpha \prop{C}]_{(4)}$} \BinaryInfC{$\bot$}
  \RightLabel{(1)} \UnaryInfC{$\neg \forall \alpha \prop{P}$}

  \AxiomC{$[\exists \alpha \neg \prop{P}]_{(2)}$} \noLine
  \UnaryInfC{\vdots} \noLine \UnaryInfC{$\exists \alpha \prop{C}$}

  \AxiomC{$[\neg \exists \alpha \prop{C}]_{(4)}$} \BinaryInfC{$\bot$}
  \RightLabel{$_{(2)}$} \UnaryInfC{$\neg\exists \alpha \neg \prop{P}$}

  \AxiomC{$[\neg \prop{P}]_{(3)}$}
  \UnaryInfC{$\exists \alpha \, \neg \prop{P}$} \BinaryInfC{$\bot$}
  \RightLabel{$_{(3)}$} \UnaryInfC{$\neg \neg \prop{P}$}

  \AxiomC{$\neg \neg \, \prop{P} \to \, \prop{P}$}

  \BinaryInfC{$\prop{P}$}
  \UnaryInfC{$\forall \alpha \, \prop{P}$}

  \insertBetweenHyps{\hskip -20pt}

  \BinaryInfC{$\bot$} \RightLabel{(4)}
  \UnaryInfC{$\neg \neg \exists \alpha \prop{C}$}
  \AxiomC{$\neg\neg\exists \alpha \, \prop{C} \to \exists \alpha \, \prop{C}$}
  \insertBetweenHyps{\hskip -20pt}
  \BinaryInfC{$\exists \alpha \, \prop{C}$}
\end{prooftree}
\end{proof}
As in \cite{AschieriZH}, all of our main results about witness extraction are valid not only for closed terms, but also for quasi-closed ones, which are those containing only pure universal assumptions.

\begin{definition}[Quasi-Closed terms]\label{def:quasi}

 An untyped proof term $t$ is said to be \emph{quasi-closed}, if it  contains as free variables only  hypothesis variables $a_{1}, \ldots, a_{n}$,  such that each occurrence of them is of the form $\Hyp{a_{i}}{\vec{\alpha}}{\prop{P}_i}$, where  $\forall \vec{\alpha}\, \prop{P}_{i}$ is simply universal.
\end{definition}
$\IL+\emeno{1}$ with the reduction rules in figure \cref{fig:red} enjoys the Subject Reduction Theorem, as a particular case of the Subject Reduction for $\IL+\EM$ presented in \cite{AschieriZH}.
\begin{theorem}[Subject Reduction]\label{subjectred}
If $\Gamma \vdash t : C$ and $t \mapsto u$, then $\Gamma \vdash u : C$.
\end{theorem}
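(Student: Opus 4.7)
The proof proceeds by induction on the one-step reduction $t\mapsto u$. Contextual (compatible) closure is mechanical: if the redex sits inside a subterm, apply the induction hypothesis on the typing subderivation of that subterm and rebuild the surrounding derivation. So the substance is the case analysis on the root reductions.

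For the standard intuitionistic reductions (beta, projection of a pair, case on an injection, destructor on an existential pair), I would invoke the usual \emph{Substitution Lemma}, in both its proof-term form (if $\Gamma, x:A \vdash u:B$ and $\Gamma \vdash t:A$ then $\Gamma \vdash u[t/x]:B$) and its first-order form (if $\Gamma \vdash u:A$ and $\alpha$ is not free in $\Gamma$ then $\Gamma \vdash u[m/\alpha]:A[m/\alpha]$); each is a routine induction on the typing derivation, with the new constants $\Hyp{a}{\alpha}{\prop{P}}$, $\Wit{a}{\alpha}{\prop{P}}$, $\Hypz{\prop{P}}$ causing no difficulty since they are typed by axioms. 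The four permutation rules of $\emme{0}$ are equally easy: for instance $(\Ez{}{u}{v})w \mapsto \Ez{}{uw}{vw}$ is handled by pushing the implication elimination inside each branch of the underlying $\emme{0}$ and repackaging with a fresh $\emme{0}$; the permutations for projection, case, and destructor follow the same recipe. The first reduction of $\emeno{1}$, namely $\E{a}{u}{v}\mapsto u$ when $a$ does not occur free in $u$, is handled by strengthening on the unused hypothesis $a:\forall\alpha\,\prop{P}$.

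The main obstacle is the second $\emeno{1}$ rule,
\[\E{a}{u}{v} \;\mapsto\; \Ez{}{v[a:=m]}{\,\E{a}{u[a:=m]}{v}\,},\]
fired when $u$ contains an active subterm $\Hyp{a}{\alpha}{\prop{P}}\,m$. The heart of the matter is an \emph{exception substitution lemma} in two parts: (i) if $\Gamma, a:\forall\alpha\,\prop{P}\vdash u:C$ and every free occurrence of $a$ in $u$ has shape $\Hyp{a}{\alpha}{\prop{P}}\,m'$, then $u[a:=m]$ remains typable with type $C$ in the same context extended by an assumption of type $\prop{P}[m/\alpha]$ used to ground the $\Hypz{\prop{P}[m/\alpha]}$ replacements; and (ii) a dual statement for $v$, with $\Wit{a}{\alpha}{\prop{P}}$ replaced by the suitable existential pair, under an enclosing assumption of type $\neg\prop{P}[m/\alpha]$. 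Both are proved by induction on the typing derivation, the only nontrivial step being at the $\Hyp{}$ and $\Wit{}$ axioms, where the replacements are explicitly well-typed. Granting these, the right-hand side of the rule is typed by first applying $\emme{0}$ on $\prop{P}[m/\alpha]$: the negative branch uses (ii) to type $v[a:=m]$ at $\exists\beta\,\prop{Q}$; the positive branch re-applies $\emeno{1}$ on the same variable $a$, combining the typing of $u[a:=m]$ given by (i) in the left premise with the original derivation of $v$, reused unchanged, in the right premise. As the authors remark, this whole argument is a specialization of the subject reduction proof for $\IL+\EM$ in \cite{AschieriZH}, since the restrictions on $\prop{P}$, $\prop{Q}$ and the conclusion of $\emeno{1}$ are preserved by every reduction; invoking that theorem directly, after this observation, is a legitimate shortcut.
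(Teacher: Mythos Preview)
Your proposal is correct. The paper itself gives no proof at all: it simply asserts that Subject Reduction for $\IL+\emeno{1}$ holds ``as a particular case of the Subject Reduction for $\IL+\EM$ presented in \cite{AschieriZH}''. You supply the actual argument that this citation is pointing to, and you correctly note in your final sentence that the direct appeal to \cite{AschieriZH} is the shortcut the authors take. So there is no difference in approach, only in level of detail: you unfold what the paper leaves implicit. One small imprecision worth cleaning up: in your clause (i) you require every free occurrence of $a$ in $u$ to have shape $\Hyp{a}{\alpha}{\prop{P}}\,m'$, but the grammar only guarantees the shape $\Hyp{a}{\alpha}{\prop{P}}$ (possibly unapplied), and Definition~\ref{def:witsub} replaces only those occurrences that are applied to the specific term $m$; accordingly, $a$ may remain free in $u[a:=m]$, and the context for $u[a:=m]$ must retain the hypothesis $a:\forall\alpha\,\prop{P}$ in addition to the new $\prop{P}[m/\alpha]$. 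Your assembly of the $\emme{0}$/$\emeno{1}$ derivation for the contractum already respects this, so the slip is purely in the phrasing of the lemma.
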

No term of $\IL+\emeno{1}$ gives rise to an infinite reduction sequence \cite{AschieriZH}.

\begin{theorem}[Strong Normalization]
Every term typable in $\IL+\emeno{1}$ is strongly normalizing.
\end{theorem}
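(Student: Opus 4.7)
The plan is to obtain strong normalization as an immediate corollary of the strong normalization theorem for $\IL+\EM$ established in \cite{AschieriZH}. The key observation, already suggested by the paper itself, is that by design $\IL+\emeno{1}$ is just a syntactic restriction of $\IL+\EM$: both the typing rules and the reduction rules of $\IL+\emeno{1}$ are special cases of the corresponding ones of $\IL+\EM$. Hence no new normalization argument is needed; what remains is bookkeeping.

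First, I would verify that every typing derivation in $\IL+\emeno{1}$ is also a valid typing derivation in $\IL+\EM$. All the intuitionistic rules of Figure \ref{fig:system-ilem} appear in $\IL+\EM$, and the propositional excluded middle $\emme{0}$ is common to both systems. Finally, $\EM_1^-$ is obtained from the rule $\EM_1$ of $\IL+\EM$ solely by requiring the conclusion to be a simply existential formula $\exists \beta\, \prop{Q}$; hence every instance of $\EM_1^-$ is, \emph{a fortiori}, a legitimate instance of $\EM_1$. Consequently, if $\Gamma\vdash t:C$ is derivable in $\IL+\emeno{1}$, then it is derivable in $\IL+\EM$ as well.

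Second, I would check that every reduction $t\mapsto u$ of $\IL+\emeno{1}$ listed in Figure \ref{fig:red} is also a reduction of $\IL+\EM$. The $\beta$-rules are standard, the permutation rules for $\emme{0}$ are identical to those in $\IL+\EM$, and the two rules governing $\E{a}{u}{v}$ are literally the specialization of the reduction rules for $\EM_1$ from \cite{AschieriZH}: in particular, the exception-substitution operation $[a:=m]$ of Definition \ref{def:witsub} is defined in exactly the same way.

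Combining the two embeddings, any purported infinite reduction sequence $t\mapsto t_1\mapsto t_2\mapsto\cdots$ starting from a term typable in $\IL+\emeno{1}$ would, applied verbatim in the ambient system, yield an infinite reduction sequence starting from the same $t$ typable in $\IL+\EM$, contradicting the strong normalization theorem of \cite{AschieriZH}. The only genuinely difficult step, were one to attempt a self-contained proof, would be the reducibility-candidates argument for the non-standard reduction that unfolds $\E{a}{u}{v}$ into a $\Ez{}{\cdot}{\cdot}$ parallel pair — but precisely that argument has already been carried out for $\IL+\EM$ in \cite{AschieriZH}, and is simply inherited here.
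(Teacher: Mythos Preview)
Your proposal is correct and follows exactly the approach the paper takes: the paper does not give an independent proof but simply cites \cite{AschieriZH}, relying on the observation (stated explicitly in the text) that $\IL+\emeno{1}$ is a syntactic restriction of $\IL+\EM$ both at the level of typing and at the level of reduction. Your write-up merely spells out in more detail the embedding that the paper leaves implicit.
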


We now update the characterization of proof-terms heads given in  \Cref{theorem:head-form} to the case of $\IL+\emeno{1}$.
\begin{theorem}[Head of a Proof Term]
  \label{theorem:head-form-em}
  Every proof term of $\IL+\emeno{1}$ is of the form:
\[\lambda z_1 \dots \lambda z_n . r u_1 \dots u_k \]
where 
\begin{itemize}
\item $r$ is either a variable $x$, a constant $\Hypz{P}$ or $\Hyp{a}{\alpha}{A}$ or $\Wit{a}{\alpha}{\prop{P}}$ or an excluded middle term $\E{a}{u}{v}$ or $\Ez{}{u}{v}$, or a term corresponding to an introduction rule $\lambda x . t$, $\lambda \alpha . t$, $\pair{t_1}{t_2}$, $\inj{i}{t}$, $\enc{m}{t}$
\item $u_1, \dots u_k$ are either lambda terms, first order terms, or one of the following expressions corresponding to elimination rules: $\proj{i}{}$, $\case{}{x.w_1}{y.w_2}$, $\dest{}{x}{t}$
\end{itemize}
\begin{proof}
Standard.
\end{proof}
\end{theorem}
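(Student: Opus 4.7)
The plan is to proceed by straightforward induction on the structure of the proof term $t$. The claim is a weak-head-decomposition style statement, and the spine $r u_1 \dots u_k$ should be read as the head $r$ followed by successive eliminators applied on the right: each $u_i$ is either a proof term (accounting for an application $\cdot u$), a first-order term (accounting for $\cdot m$), or one of the three elimination contexts $\proj{i}{}$, $\case{}{x.w_1}{y.w_2}$, or $\dest{}{x}{w}$.

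In the base cases, whenever $t$ is already one of the admissible heads---a variable $x$, a constant $\Hypz{\prop{P}}$, a hypothesis term $\Hyp{a}{\alpha}{A}$, a witness term $\Wit{a}{\alpha}{\prop{P}}$, an excluded-middle term $\Ez{}{u}{v}$ or $\E{a}{u}{v}$, or an introduction term among $\lambda x\, u$, $\lambda\alpha\, u$, $\pair{u_1}{u_2}$, $\inj{i}{u}$, $\enc{m}{u}$---I take $n=0$, $k=0$, and $r=t$, producing an empty spine.

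For the inductive cases, $t$ arises from a principal subterm $s$ by applying one of the five eliminators $e$ built from the outer constructor of $t$. I invoke the inductive hypothesis on $s$ to obtain $s = \lambda z_1 \dots \lambda z_n . r v_1 \dots v_k$. If $n=0$, I extend the spine by appending $e$, producing $t = r v_1 \dots v_k\, e$ with the same head $r$. If $n>0$, then $s$ itself begins with a $\lambda$-abstraction and is therefore a legitimate introduction head; in this sub-case I take $r'=s$ with an empty $\lambda$-prefix, and let the spine be the single eliminator $e$, so that $t = s\, e$.

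The only place where one might pause is this last sub-case: the IH-produced decomposition of the principal subterm carries a non-empty $\lambda$-prefix that one cannot push through the enclosing eliminator. The resolution is to observe that the statement deliberately allows introduction terms in head position, so redexes are permitted in the head decomposition, and this is exactly what absorbs the awkward case. Hence there is no real obstacle, and the result is indeed standard.
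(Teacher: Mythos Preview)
Your argument is correct and is exactly the routine structural induction the paper gestures at with ``Standard.'' The one point worth making explicit, which you handle correctly, is the $n>0$ sub-case: since the statement permits introduction terms (in particular $\lambda$-abstractions) as heads, you may always collapse a $\lambda$-prefixed principal subterm back to a head with empty prefix when wrapping it in an eliminator, so the induction goes through without further work.
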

We now study the shape of the normal terms with the most simple types. 
\begin{proposition}[Normal Form Property]\label{prop:pnf}
Let $\prop{P},\prop{P}_1,\dots \prop{P}_n$ be negative propositional formulas, $A_1, \dots, A_m$ simply universal formulas. Suppose that 
\[ \Gamma =  z_1: \prop{P}_1, \dots z_n: \prop{P}_n, a_1 : \forall \alpha_1 A_1, \dots a_m : \forall \alpha_m A_m \]
and $\Gamma \vdash t:\exists {\alpha}\, \prop{P}$ or $\Gamma \vdash t: \prop{P}$, with $t$ in normal form and having all its free variables among $z_1, \dots z_n, a_1, \dots a_m $. Then:
\begin{enumerate}

\item Every occurrence in $t$ of every  term  $\Hyp{a_{i}}{\alpha_{i}}{A_i}$ is of the active form $\Hyp{a_{i}}{\alpha_{i}}{A_i}m$, where $m$ is a  term of $\mathcal{L}$.

\item $t$ cannot be of the form $u\parallel_{a} v$.
 \end{enumerate} 
\end{proposition}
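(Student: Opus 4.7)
The plan is to prove parts 1 and 2 simultaneously by induction on the structure of $t$. Using \Cref{theorem:head-form-em}, I write $t = \lambda z_1 \dots \lambda z_n.\, r u_1 \dots u_k$. Since neither $\exists \alpha\, \prop{P}$ nor the negative propositional $\prop{P}$ begins with a $\forall$, no outer $\lambda \alpha$ abstraction can occur; any outer $\lambda x_j$ (present only when $t:\prop{P}$ and $\prop{P}$ is an arrow) merely extends $\Gamma$ with a fresh variable of negative propositional type, so the form of $\Gamma$ assumed in the statement is preserved. Hence I may assume $t = r u_1 \dots u_k$ and proceed by case analysis on $r$.

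The routine cases rely on the syntactic observation that subformulas of a negative propositional (resp.\ simply universal) formula are again negative propositional (resp.\ simply universal); this ensures that every elimination argument $u_j$ has a type fitting the hypothesis of the proposition, so that the inductive hypothesis applies to it. When $r$ is a proof-term variable, $\Hypz{\prop{P}}$, or $\Hypz{\prop{\bot \to P}}$, the head carries a negative propositional type, and successive $\wedge$- and $\to$-eliminations keep us in the propositional fragment; this rules out the case $t : \exists\alpha\,\prop{P}$ by type mismatch and leaves $t:\prop{P}$ to be closed by IH on the arguments. When $r = \Hyp{a_i}{\alpha_i}{A_i}$, the outermost $\forall \alpha_i$ in its type must be eliminated first, forcing $u_1$ to be a first-order term $m$ and making that occurrence active; the remaining $u_j$ are handled by IH. The case $r = \Wit{a}{\alpha}{\prop{P}}$ is excluded because $a$ would have to lie in $\Gamma$ with type $\exists\alpha\,\lnot \prop{P}$, which $\Gamma$ does not contain. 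When $r$ is an introduction, normality forces $k = 0$, and the form of $r$ must match the type: $\pair{\cdot}{\cdot}$ or $\lambda x.\,\cdot$ for $t:\prop{P}$, $\enc{m}{\cdot}$ for $t:\exists\alpha\,\prop{P}$, while $\inj{i}{\cdot}$ and $\lambda \alpha.\,\cdot$ are excluded by type mismatch. Finally, if $r = \Ez{}{u}{v}$, the $\emme{0}$ permutation rules force $k = 0$, after which the IH applies to $u$ and $v$ in contexts extended by a single negative propositional hypothesis.

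The decisive case is $r = \E{a}{u}{v}$. Here $u$ has an existential type and is typed in the context $\Gamma, a : \forall \alpha\,\prop{P}$, which is again of the form required by the proposition. By part 1 of the inductive hypothesis, every $\Hyp{a}{\alpha}{\prop{P}}$ inside $u$ is active. For $t$ to be normal, the first $\emeno{1}$-rule must not apply, so $a$ must be free in $u$; by the grammatical constraint on $\E{a}{u}{v}$, this free occurrence of $a$ must appear as some $\Hyp{a}{\alpha}{\prop{P}}$, which by IH is active. But then the second $\emeno{1}$-rule fires, contradicting normality of $t$. This both excludes the head $r = \E{a}{u}{v}$ from the analysis and---applied at the top level---establishes part 2 of the proposition.

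The main obstacle will be the bookkeeping of the second paragraph: verifying uniformly, across all subcases, that whenever we descend into some $u_j$ its type still fits the hypothesis of the proposition, so that the IH can be invoked. Once the closure of negative propositional and simply universal formulas under the relevant subformula relation is in place, the crucial conceptual step---turning part 1 of the inductive hypothesis into a redex in the $\E{a}{u}{v}$ case---drops out essentially for free.
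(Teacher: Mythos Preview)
Your proposal is correct and follows essentially the same approach as the paper: a simultaneous structural induction on $t$, with the key case $r = \E{a}{u}{v}$ handled by applying part~1 of the inductive hypothesis to the left branch $u$ in the extended context $\Gamma, a:\forall\alpha\,\prop{P}$, obtaining an active $\Hyp{a}{\alpha}{\prop{P}}m$ inside $u$ and hence a redex. The only cosmetic difference is that you organize the case split via the head-form decomposition of \Cref{theorem:head-form-em} (peeling off the $\lambda$-prefix first), whereas the paper enumerates the shapes of $t$ directly; the resulting case analysis and the use of the subformula closure of negative propositional and simply universal formulas are the same.
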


\begin{proof} 
We prove 1. and 2. simultaneously and by induction on  $t$. There are several cases, according to the shape of $t$:\\
\begin{itemize}
\item $t=(m, u)$, $\Gamma\vdash t:\exists {\alpha}\, \prop{P}$ and $\Gamma \vdash u: \prop{P}[m/\alpha]$. We immediately get 1. by induction hypothesis applied to $u$, while  2. is obviously verified.

\item $t=\lambda x\, u$, $\Gamma \vdash t: \prop{P}=\prop{Q}\rightarrow \prop{R}$ and $\Gamma, x: \prop{Q} \vdash u: \prop{R}$. We immediately get 1. by induction hypothesis applied to $u$, while  2. is obviously verified.

\item $t=\langle u, v\rangle$, $\Gamma \vdash t: \prop{P}=\prop{Q}\land \prop{R}$, $\Gamma \vdash u: \prop{Q}$ and $\Gamma \vdash v: \prop{R}$. We immediately get 1. by induction hypothesis applied to $u$, while  2. is obviously verified.

\item $t=\Ez{}{u}{v}$, $\Gamma, a: \neg \prop{Q} \vdash u: \exists {\alpha}\, \prop{P}$ (resp. $u: \prop{P}$) and $\Gamma, a: \prop{Q} \vdash v: \exists {\alpha}\,\prop{P}$ (resp. $v: \prop{P}$). We immediately get the thesis by induction hypothesis applied to $u$ and $v$, while 2. is obviously verified.

\item  $t=\E{a}{u}{v}$. We show that this is not possible. Note that $a$ must occur free in $u$, otherwise $t$ is not in normal form. Since  $\Gamma, a: \forall \beta\, A \vdash u: \exists {\alpha}\, \prop{P}$, 
we can apply the induction hypothesis to $u$, and obtain that all occurrences of hypothetical terms must be active; in particular, this must be the case for the occurrences of $\Hyp{a}{\beta}{A}$, but this is not possible since $t$ is in normal form.

\item $t=\Hyp{a_{i}}{\alpha}{A_{i}}$. This case is not possible, for  $\Gamma\vdash t:\exists {\alpha}\, \prop{P}$ or $\Gamma\vdash t: \prop{P}$.

\item $t=\Hypz{\prop{P}}$. In this case, 1. and 2. are trivially true.

\item $t$ is obtained by an elimination rule and by \Cref{theorem:head-form-em} we can write it as $r\, t_{1}\,t_{2}\ldots t_n$. Notice that in this case $r$ cannot correspond to an introduction rule neither be a term of the form $\E{a}{u}{v}$, because of the induction hypothesis, nor $\Ez{}{u}{v}$, because of the permutation rules and  $t$ being in normal form; moreover, $r$ cannot be $\Wit{b}{\alpha}{P}$, otherwise $b$ would be free in $t$ and $b\neq a_{1}, \ldots, a_{n}$.
We have now two remaining cases:
\begin{enumerate}
\item
$r=x_{i}$ (resp. $r=\Hypz{\prop{P}}$). Then, since $\Gamma \vdash x_{i}: \prop{P}_{i}$ (resp. $\Gamma \vdash \Hypz{\prop{P}}: \prop{P}$), we have that for each $i$, either $t_{i}$  is $\pi_{j}$ or $\Gamma\vdash t_{i}: \prop{Q}$, where $\prop{Q}$ is a negative propositional formula. By induction hypothesis, each $t_{i}$ satisfies 1. and also $t$. 2. is obviously verified.

\item  $r=\Hyp{a_{i}}{\alpha_i}{{A}_{i}}$. Then, $t_{1}$ is $m$, for some closed term of $\mathcal{L}$. Let $A_{i}=\forall\gamma_1\ldots\forall\gamma_l \, \prop{Q}$, with $\prop{Q}$ propositional, we have that for each $i$, either $t_i$ is a closed term $m_i$ of $\mathcal{L}$ or  $t_{i}$  is $\pi_{j}$ or $\Gamma\vdash t_{i}: \prop{R}$, where $\prop{R}$ is a negative propositional formula. By induction hypothesis, each $t_{i}$ satisfies 1. and thus also $t$, while 2. is obviously verified.
\end{enumerate}
\end{itemize}
\end{proof}

If we omit the parentheses, we will show that every normal proof-term having as type an existential formula can be written as $\Ez{}{\Ez{}{\Ez{}{v_0}{v_{1}}}{}\ldots}{v_{n}}$, where each $v_{i}$ is not of the form $\Ez{}{u}{v}$; if for every $i$, $v_i$ is of the form $(m_i,u_i)$, then we call the whole term an \emph{Herbrand normal form}, because it is essentially a list of the witnesses  appearing in an Herbrand disjunction. Formally:
\begin{definition}[Herbrand Normal Forms] \label{definition-hnf}
We define by induction a set of proof terms, called \emph{Herbrand normal forms}, as follows:
\begin{itemize}
\item Every normal proof-term $(m,u)$ is an Herbrand normal form;
\item if $u$ and $v$ are Herbrand normal forms, $\Ez{}{u}{v}$ is an Herbrand normal form.
\end{itemize}
\end{definition}

Our last task is to prove that all quasi-closed proofs of any existential statement $\exists \alpha\, A$ include an exhaustive sequence $m_{1}, m_{2}, \ldots, m_{k}$ of possible witnesses. This theorem is stronger than the usual Herbrand theorem for classical logic \cite{AschieriZH}, since we are stating it for any existential formula and not just for formulas with a single and existential quantifier.

\begin{theorem}[Herbrand Disjunction Extraction]\label{theorem-extraction}
Let $\exists\alpha\,A$ be any closed formula. Suppose $\Gamma \vdash t:  \exists \alpha\, A$ in $\IL+\emeno{1}$ for a quasi closed term $t$, and $t\mapsto^{*} t'$ with $t^\prime$ in normal form. Then $\Gamma \vdash t': \exists \alpha\, A$ and $t'$ is an Herbrand normal form
$$\Ez{}{\Ez{}{\Ez{}{(m_{0}, u_{0})}{(m_{1}, u_{1})}}{}\ldots}{(m_{k}, u_{k})}$$
Moreover, $\Gamma \vdash A[m_{1}/\alpha]\lor \dots \lor A[m_{k}/\alpha]$.
\end{theorem}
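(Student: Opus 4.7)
The plan is to combine Subject Reduction with a structural analysis of the normal form $t'$ and then build the Herbrand disjunction by recursion on that structure. First, iterating \Cref{subjectred} along $t \mapsto^{*} t'$ gives $\Gamma \vdash t': \exists \alpha\, A$. The bulk of the work is then to show, by induction on the size of $t'$, that $t'$ falls under \Cref{definition-hnf} of Herbrand normal form.

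For the structural induction I invoke \Cref{theorem:head-form-em} to write $t' = r\, u_{1}\ldots u_{k}$, with no leading abstractions since $\exists\alpha\, A$ is neither functional nor universal. The case analysis on $r$ goes as follows: plain variables, $\Wit{a}{\alpha}{\prop{P}}$, and $\Hypz{\prop{P}}$ for propositional $\prop{P}$ would require free variables forbidden by the quasi-closedness of $t$ (preserved by reduction); $\Hyp{a_{i}}{\vec{\alpha}}{P_{i}}$ is incompatible with the existential type, because $\forall\vec{\alpha}\, P_{i}$ is simply universal and any sequence of eliminations from it yields a negative propositional type; $\Hypz{\prop{\bot\to P}}$ similarly can yield only a propositional type after its first application; $r = \E{a}{u'}{v'}$ forces $k = 0$ (since $\exists\beta\, Q$ is not functional), and the $\emeno{1}$ rule constrains $\exists\alpha\, A$ to be of the form $\exists\beta\, Q$ with $Q$ negative propositional, placing us exactly in the hypothesis of \Cref{prop:pnf}, whose clause~2 rules out this shape; an introduction-rule head must be $\enc{m}{u_{0}}$ with $k = 0$, giving the leaf $(m, u_{0})$; and $r = \Ez{}{u'}{v'}$ has $k = 0$ by the permutation rules, with both premises of type $\exists\alpha\, A$ under contexts extended only by a propositional hypothesis $a\colon\prop{P}$ or $a\colon\neg\prop{P}$, so the induction hypothesis exhibits them as Herbrand normal forms and hence $t'$ as well.

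Once $t'$ is established to be an Herbrand normal form, the disjunction is built by a second induction on its tree shape. At a leaf $(m_{i}, u_{i})$, the sub-proof $u_{i}: A[m_{i}/\alpha]$ is lifted into $A[m_{1}/\alpha] \lor \ldots \lor A[m_{k}/\alpha]$ by successive $\vee$-introductions. At an internal $\Ez{}{u'}{v'}$ node arising from $\emme{0}$, the two sub-disjunctions, living in contexts $\Gamma, a\colon\neg \prop{P}$ and $\Gamma, a\colon \prop{P}$, are combined into one in $\Gamma$ by a single application of $\emme{0}$ discharging the hypothesis $a$.

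The hard part will be the case $r = \E{a}{u'}{v'}$: one has to verify carefully that the $\emeno{1}$ typing constraint on its conclusion forces $A$ to coincide with a negative propositional $Q$, so that \Cref{prop:pnf} applies verbatim. A secondary subtlety is that the subterms of $\Ez{}{}{}$ nodes carry an additional propositional hypothesis in their contexts, so the induction must be formulated to tolerate this extension without breaking the quasi-closedness condition on first-order data, which is only required to restrict the shape of free \emph{hypothesis} variables, not the propositional ones introduced by $\emme{0}$.
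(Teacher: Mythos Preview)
Your proposal is correct and follows essentially the same route as the paper: Subject Reduction, then structural induction on the normal form using \Cref{theorem:head-form-em}, with \Cref{prop:pnf} handling the $\E{a}{}{}$ case and the $\Ez{}{}{}$ case closed by induction; the Herbrand disjunction is then assembled by wrapping each $u_i$ in $\vee$-introductions and recombining via $\emme{0}$. One small fix: in the recursive $\Ez{}{}{}$ step the extended context \emph{does} carry a propositional declaration, so the head $r=\Hypz{\prop{P}}$ must be excluded there by the typing argument (it can only produce a propositional type after eliminations, never $\exists\alpha\, A$), not by quasi-closedness---this is exactly the subtlety you flag at the end, and the paper uses the typing argument throughout.
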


\begin{proof}
By the Subject Reduction Theorem \ref{subjectred}, $\Gamma\vdash t': \exists \alpha\, A$.
We proceed by induction on the structure of $t^\prime$. According to \Cref{theorem:head-form-em}, we can write $t^\prime$ as $r u_1\dots u_n$. Note that since $t'$ is quasi closed, $r$ cannot be a variable $x$; moreover, $r$ cannot be a term $\Hypz{\prop{P}}$ or $\Hyp{b}{\alpha}{B}$, otherwise $t'$ would not have type $\exists \alpha\, A$, nor a term $\Wit{b}{\alpha}{\prop{P}}$, otherwise $t'$ would not be quasi closed.  $r$ also cannot be of the shape $\E{a}{u}{v}$,  otherwise  $\Gamma \vdash \E{a}{u}{v} : \exists \alpha\, \prop{Q}$, for some negative propositional $\prop{Q}$, but from \Cref{prop:pnf} we know that this is not possible. By \Cref{theorem:head-form-em}, we are now left with only two possibilities.

\begin{enumerate}
\item $r$ is obtained by an introduction rule. Then $n=0$, otherwise there is a redex, and thus the only possibility is $t^\prime = r = \enc{n}{u}$ which is an Herbrand Normal Form.
\item $r = \Ez{}{u}{v}$. Again $n=0$, otherwise we could apply a permutation rule; then $t^\prime = r = \Ez{}{u}{v}$, and the thesis follows by applying the induction hypothesis on $u$ and $v$.
\end{enumerate}
We have thus shown that $t^\prime$ is an Herbrand normal form $$\Ez{}{\Ez{}{\Ez{}{(m_{0}, u_{0})}{(m_{1}, u_{1})}}{}\ldots}{(m_{k}, u_{k})}$$ 
 Finally, we have that for each $i$, $\Gamma_{i}\vdash u_{i}: A[m_{i}/\alpha]$, for the very same  $\Gamma_{i}$ that types $(m_{i}, u_{i})$ of type $\exists \alpha\, A$ in $t'$. Therefore, for each $i$, $\Gamma_{i}\vdash u_{i}^{+}: A[m_{1}/\alpha]\lor \dots \lor A[m_{k}/\alpha]$, where $u_{i}^{+}$ is of the form $\inj{i_{1}}{\ldots \inj{i_{k}}{u_{i}}\ldots }$.  We conclude that 
 $$\Gamma \vdash \Ez{}{\Ez{}{\Ez{}{u_{0}^{+}}{u_{1}^{+}}}{}\ldots}{u_{k}^{+}}: A[m_{1}/\alpha]\lor \dots \lor A[m_{k}/\alpha]$$
  \end{proof}

\section{Markov's Principle in Arithmetic}
\label{sec:arithmetic}
The original statement of Markov's principle refers to Arithmetic and can be formulated in the system of Heyting Arithmetic $\HA$ as
$$ \neg \neg \exists \alpha \, \prop{P} \to \exists \alpha \, \prop{P} \text{, for } \prop{P} \text{ atomic}$$
By adapting $\IL+\emeno{1}$ to Arithmetic, following \cite{ABB}, we will now provide a new computational interpretation of Markov's principle. Note first of all that propositional formulas are decidable in intuitionistic Arithmetic $\HA$: therefore we will not need the rule $\emeno{0}$ and the parallelism operator.  For the very same reason, we can expect the system $\HA+\emeno{1}$ to be constructive and the proof to be similar to the one of Herbrand constructivity for $\IL+\emeno{1}$.  In this section indeed we will give such a syntactic proof. We could also have used the realizability interpretation for $\HA+\emme{1}$ introduced in \cite{ABB} (see \cite{Manighetti}).

\subsection{The system $\HA+\emeno{1}$}

We will now introduce the system $\HA+\emeno{1}$. We start by defining the language:
\begin{definition}[Language of $\HA + \emeno{1}$]\label{definition-languagear}
  The language $\mathcal{L}$ of $\HA + \EM_1$ is defined as follows.
  \begin{enumerate}
  \item The terms of $\mathcal{L}$ are inductively defined as either variables $\alpha, \beta,\ldots$ or $0$ or $\suc(t)$ with $t\in\mathcal{L}$. A numeral is a term of the form $\suc\ldots \suc 0$. \\
  \item There is one symbol $\mathcal{P}$ for every primitive recursive relation over $\mathbb{N}$; with $\mathcal{P}^{\bot}$ we denote the symbol for the complement of the relation denoted by $\mathcal{P}$. The atomic formulas of $\mathcal{L}$ are all the expressions of the form $\mathcal{P}(t_{1}, \ldots, t_{n})$ such that $t_{1}, \ldots, t_{n}$ are terms of $\mathcal{L}$ and $n$ is the arity of $\mathcal{P}$. Atomic formulas will also be denoted as $\prop{P}, \mathsf{Q}, \prop{P_i}, \ldots$ and $\mathcal{P}(t_{1}, \ldots, t_{n})^{\bot}:=\mathcal{P}^{\bot}(t_{1}, \ldots, t_{n})$. \\
\item The formulas of $\mathcal{L}$ are built from atomic formulas of $\mathcal{L}$ by the connectives $\lor,\land,\rightarrow, \forall,\exists$ as usual, with quantifiers ranging over numeric variables $\alpha^{\Nat}, \beta^{\Nat}, \ldots$.\\
\end{enumerate}
\end{definition}

The system $\HA+\emeno{1}$ in \cref{fig:haemeno} extends the usual Curry-Howard correspondence for $\HA$ with our rule $\emeno{1}$ and is a restriction of the system introduced in \cite{ABB}. The purely universal arithmetical axioms are introduced by means of Post rules, as in Prawitz \cite{Prawitz}.

As we anticipated, there is no need for a parallelism operator. Therefore $\emeno{1}$ introduces a pure delimited exception mechanism, explained by the reduction rules in \cref{fig:F}: whenever we have a term $\E{a}{u}{v}$ and $\Hyp{a}{\alpha}{\prop{P}}m$ appears inside $u$, we can recursively \emph{check} whether $\prop{P}[m/\alpha]$ holds, and switch to the exceptional path if it doesn't; alternatively, if it does hold we can remove the instance of the assumption. When there are no free assumptions relative to $a$ left in $u$, we can forget about the exceptional path.

\begin{figure*}[!htb]
\begin{description}
\item[Grammar of Untyped Terms]
\[t,u, v::=\ x\  |\ tu\ |\ tm\ |\ \lambda x\, u\  |\ \lambda \alpha\, u\ |\ \pair{t}{u}\ |\ \proj{0}{u}\ |\ \proj{1}{u} \ |\ \inj{0}{u}\ |\ \inj{1}{u}\  |\ \case{t}{x.u}{y.v}\ |\ (m,t)\ |\ t[(\alpha, x). u]\] \[|\ (\E{a}{u}{v})\ |\ \Hyp{a}{\alpha}{\prop{P}}\ |\ \Wit{a}{\alpha}{\prop{P}}\ |\ \True \ |\ \rec u v m \ |\ \mathsf{r}t_{1}\ldots t_{n} \]

where $m$ ranges over terms of $\mathcal{L}$, $x$ over variables of the lambda calculus and $a$ over $\EM_1$ hypothesis variables. Moreover, in terms of the form $\E{a}{u}{v}$ there is a $\prop{P}$ such that all the free occurrences of $a$ in $u$ are of the form $\Hyp{a}{\alpha}{\prop{P}}$ and those in $v$ are of the form $\Wit{a}{\alpha}{\prop{P}^{\bot}}$.
\item[Contexts] With $\Gamma$ we denote contexts of the form $e_1:A_1, \ldots, e_n:A_n$, where $e_{i}$ is either a proof-term variable $x, y, z\ldots$ or a $\EM_{1}$ hypothesis variable $a, b, \ldots$

\item[Axioms] 
$\begin{array}{c} \Gamma, x:{A}\vdash x: A 
\end{array}$
$\begin{array}{c} \Gamma, a:{\forall \alpha^{\Nat} \prop{P}}\vdash \Hyp{a}{\alpha}{\prop{P}}: \forall\alpha^{\Nat} \prop{P}
\end{array}$
$\begin{array}{c} \Gamma, a:{\exists \alpha^{\Nat} \prop{P}^\bot}\vdash \Wit{a}{\alpha}{P}: \exists\alpha^{\Nat} \prop{P}^\bot
\end{array}$

\item[Conjunction] 
$\begin{array}{c}  \Gamma \vdash u:  A\ \ \ \Gamma\vdash t: B\\ \hline \Gamma\vdash \pair{u}{t}:
A\wedge B
\end{array}\ \ \ \ $
$\begin{array}{c} \Gamma \vdash u: A\wedge B\\ \hline\Gamma \vdash \proj{0}{u}: A
\end{array}\ \ \ \ $
$\begin{array}{c}  \Gamma \vdash u: A\wedge B\\ \hline \Gamma\vdash \proj{1}{u} : B
\end{array}$\\\\

\item[Implication] 
$\begin{array}{c}  \Gamma\vdash t: A\rightarrow B\ \ \ \Gamma\vdash u:A \\ \hline
\Gamma\vdash tu:B
\end{array}\ \ \ \ $
$\begin{array}{c}  \Gamma, x:A \vdash u: B\\ \hline \Gamma\vdash \lambda x\, u:
A\rightarrow B
\end{array}$\\\\
\item[Disjunction Introduction] 
$\begin{array}{c}  \Gamma \vdash u: A\\ \hline \Gamma\vdash \inj{0}{u}: A\vee B
\end{array}\ \ \ \ $
$\begin{array}{c}  \Gamma \vdash u: B\\ \hline \Gamma\vdash\inj{1}{u}: A\vee B
\end{array}$\\\\

\item[Disjunction Elimination] $\begin{array}{c} \Gamma\vdash u: A\vee B\ \ \ \Gamma, x: A \vdash w_1: C\ \ \ \Gamma, y:B\vdash w_2:
C\\ \hline \Gamma\vdash  u\, [x.w_{1}, y.w_{2}]: C
\end{array}$\\\\

\item[Universal Quantification] 
$\begin{array}{c} \Gamma \vdash u:\forall \alpha^\Nat A\\ \hline  \Gamma\vdash um: A[m/\alpha]
\end{array} $
$\begin{array}{c}  \Gamma \vdash u: A\\ \hline \Gamma\vdash \lambda \alpha\, u:
\forall \alpha^\Nat A
\end{array}$\\

where $m$ is any term of  the language $\mathcal{L}$ and $\alpha$ does not occur
free in any formula $B$ occurring in $\Gamma$.\\

\item[Existential Quantification] 
$\begin{array}{c}\Gamma\vdash  u: A[m/\alpha]\\ \hline \Gamma\vdash (
m,u): \exists \alpha^\Nat A \end{array}$ \ \ \ \
$\begin{array}{c} \Gamma\vdash u: \exists \alpha^\Nat A\ \ \ \Gamma, x: A \vdash t:C\\ \hline \Gamma\vdash u\, [(\alpha, x). t]: C \end{array} $\\

where $\alpha$ is not free in $C$
nor in any formula $B$ occurring in $\Gamma$.\\

\item[Induction] 
$\begin{array}{c} \Gamma\vdash u: A(0)\ \ \ \Gamma\vdash v:\forall \alpha^{\Nat}. A(\alpha)\rightarrow A(\suc(\alpha))\\ \hline \Gamma\vdash \rec uvm : A[m/\alpha] \end{array}$, where $m$ is a term of $\mathcal{L}$

\item[Post Rules] 
$\begin{array}{c} \Gamma\vdash u_1: A_1\ \Gamma\vdash u_2: A_2\ \cdots \ \Gamma\vdash u_n: A_n\\ \hline\Gamma\vdash u: A \end{array}$

where $A_1,A_2,\ldots,A_n,A$ are atomic formulas of $\HA$ and the rule is a Post rule for equality, for a Peano axiom or for a classical propositional
tautology or for booleans and if $n>0$, $u=\mathsf{r} u_{1}\ldots u_{n}$, otherwise $u=\True$.

\item[$\EM_1^-$]$\begin{array}{c} \Gamma, a: \forall \alpha \, \prop{P} \vdash u: \exists \beta \ \prop{Q} \ \ \ \ \Gamma, a:  \exists \alpha\, \neg \prop{P} \vdash v: \exists \beta \ \prop{Q} \\ \hline \Gamma\vdash  \E{a}{u}{v} : \exists \beta \ \prop{Q} \end{array} \ \text{(} \prop{P} \text{ atomic},\prop{Q} \text{ negative propositional) }$
\end{description}

\caption{Term Assignment Rules for $\HA+\EM_{1}$}
\label{fig:haemeno}
\end{figure*}

\begin{figure*}[!htb]
\begin{description}
\item[Reduction Rules for $\HA$] 
 \[(\lambda x. u)t\mapsto u[t/x]\qquad (\lambda \alpha. u)m\mapsto u[m/\alpha]\]
  \[ \proj{i}{\pair{u_0}{u_1}}\mapsto u_i, \mbox{ for i=0,1}\]
 \[\case{\inj{i}{u}}{x_{1}.t_{1}}{x_{2}.t_{2}}\mapsto t_{i}[u/x_{i}], \mbox{ for i=0,1} \]
 \[\dest{\enc{m}{u}}{x}{v} \mapsto v[m/\alpha][u/x], \mbox{ for each term $m$ of $\mathcal{L}$} \]
\[\rec u v 0 \mapsto u\]
\[\rec u v (\suc n) \mapsto v n (\rec u v n), \mbox{ for each numeral $n$} \]

 \item[Reduction Rules for $\emeno{1}$]
 \[\E{a}{u}{v}\mapsto u,\, \mbox{ if $a$ does not occur free in $u$ }\]
 \[\E{a}{u}{v}\mapsto v[a:=n],\mbox{ if $\Hyp{a}{\alpha}{\prop{P}}n$ occurs in $u$ and $\prop{P}[n/\alpha]$ is \emph{closed} and $\prop{P}[n/\alpha] = \False$}\]
\[(\Hyp{a}{\alpha}{\prop{P}})n \mapsto \True \mbox{ if $\prop{P}[n/\alpha]$ is \emph{closed} and $\prop{P}[n/\alpha] \equiv \True$}\] 

\end{description}
\caption{Reduction Rules for $\HA$ + $\EM_{1}$}
\label{fig:F}
\end{figure*}

Similarly to the previous sections, we extend the characterization of the proof-term heads to take into account the new constructs.

\begin{theorem}[Head of a Proof Term]
  \label{theorem:head-form-ha}
  Every proof term of $\HA+\emeno{1}$ is of the form:
\[\lambda z_1 \dots \lambda z_n . r u_1 \dots u_k \]
where 
\begin{itemize}
\item $r$ is either a variable $x$, a constant $\Hyp{a}{\alpha}{P}$, $\Wit{a}{\alpha}{P}$, $\mathsf{r}$ or $\rec$, an excluded middle term $\E{a}{u}{v}$, or a term corresponding to an introduction rule $\lambda x . t$, $\lambda \alpha . t$, $\pair{t_1}{t_2}$, $\inj{i}{t}$, $\enc{m}{t}$
\item $u_1, \dots u_k$ are either lambda terms, first order terms, or one of the following expressions corresponding to elimination rules: $\proj{i}{}$, $\case{}{x.w_1}{y.w_2}$, $\dest{}{x}{t}$
\end{itemize}

\end{theorem}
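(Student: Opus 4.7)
My plan is to proceed by a straightforward structural induction on the proof term $t$, mirroring the arguments already alluded to in the proofs of \Cref{theorem:head-form} and \Cref{theorem:head-form-em}. The grammar of terms of $\HA+\emeno{1}$ partitions into three kinds of constructors: atomic forms (variables, the constants $\Hyp{a}{\alpha}{\prop{P}}$, $\Wit{a}{\alpha}{\prop{P}}$, $\True$, and the bare symbols $\rec$ and $\mathsf{r}$), introduction-like forms (the usual intuitionistic introductions together with the excluded-middle term $\E{a}{u}{v}$), and elimination-like forms (application $sw$, first-order application $sm$, projections $\proj{i}{s}$, case $\case{s}{x.w_1}{y.w_2}$, existential destructor $\dest{s}{x}{v}$, recursion $\rec{u}{v}{m}$, and Post-rule applications $\mathsf{r}t_1\ldots t_n$). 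The constructors in the first two kinds are exactly the admissible heads $r$ in the statement.

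For the atomic and introduction-like cases I take $n=0$, $k=0$ and $r=t$; this handles variables, constants, introduction forms and the construct $\E{a}{u}{v}$ immediately. For an elimination-like case, I would write $t$ as an elimination context applied to a distinguished principal subterm $s$: for example, if $t=sw$ the context is application to $w$, if $t=\proj{i}{s}$ the context is projection, and for $\rec{u}{v}{m}$ or $\mathsf{r}t_1\ldots t_n$ the head is the symbol $\rec$ or $\mathsf{r}$ itself with $u,v,m$ or $t_1,\ldots,t_n$ as arguments. Applying the induction hypothesis to $s$ yields $s=\lambda z_1\dots\lambda z_n. r\, u_1\dots u_k$. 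If $n=0$ I append the new argument to the existing sequence, obtaining the required decomposition of $t$. If instead $n>0$, then $s$ begins with a lambda and is itself an admissible head, so I simply take the new $r$ to be $s$ and the new argument as the sole $u_1$, with zero leading lambdas.

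The recursion and Post-rule constructs introduce no genuinely new difficulty: both $\rec$ and $\mathsf{r}$ are listed among the admissible heads, and the terms $\rec{u}{v}{m}$ and $\mathsf{r}t_1\ldots t_n$ are already in head-arguments form by construction. Likewise, the excluded-middle term $\E{a}{u}{v}$ is admissible both as a head in its own right and as the principal subterm of an enclosing elimination context, so it requires no special treatment beyond what the induction already provides.

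I do not expect any real obstacle here: the statement is a direct structural consequence of the grammar of \Cref{fig:haemeno}, which is why it is labelled ``Standard'' in the analogous \Cref{theorem:head-form} and \Cref{theorem:head-form-em}. The only subtlety is that the decomposition is not unique (a leading $\lambda x.u$ can either be absorbed into the prefix $\lambda z_1\dots\lambda z_n$ or kept as the head $r$), and the proof just needs to consistently choose one admissible decomposition at each inductive step.
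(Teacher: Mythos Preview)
Your proposal is correct and follows exactly the approach the paper intends: a routine structural induction on the grammar of proof terms, which is why the paper simply labels the analogous \Cref{theorem:head-form} and \Cref{theorem:head-form-em} as ``Standard'' and gives no proof at all for \Cref{theorem:head-form-ha}. The one small wrinkle you noticed---that the constant $\True$ appears in the grammar of \Cref{fig:haemeno} but is omitted from the list of admissible heads in the theorem statement---is an oversight in the paper, not a flaw in your argument.
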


The new system proves exactly the same formulas that can be proven by making use of Markov's principle in Heyting Arithmetic.
\begin{theorem}
  For any formula $F$ in the language $\mathcal{L}$, $\HA+\marp \vdash F$ if and only if $\HA + \EM_1^- \vdash F$
\end{theorem}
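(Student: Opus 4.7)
The plan is to imitate, in the arithmetic setting, both directions of the corresponding equivalence proof given in \Cref{sec:ilemeno} for $\IL+\emeno{1}$ versus $\IL+\marp$, while exploiting the fact that $\HA$ already decides its atomic formulas: this makes the rule $\emme{0}$ superfluous and simplifies the arguments.

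For the direction $\HA+\emeno{1} \vdash \marp$, it suffices to derive $\neg\neg \exists \alpha\, \prop{P} \to \exists \alpha\, \prop{P}$ for every atomic $\prop{P}$. I would transcribe the natural deduction tree proving $\IL+\emeno{1}\vdash\marp$ given in \Cref{sec:ilemeno}, replacing every appeal to $\emme{0}$ on a propositional formula by a direct appeal to the decidability of atomic formulas of $\HA$, which is provable via Post rules and induction; this immediately yields $\neg\neg \prop{P}\to \prop{P}$ wherever needed, so the same tree typechecks in $\HA+\emeno{1}$.

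For the direction $\HA+\emeno{1}\vdash F$ implies $\HA+\marp\vdash F$, I would proceed by induction on the derivation of $F$, so that only the rule $\emeno{1}$ needs to be simulated in $\HA+\marp$. Given proofs of $\exists \beta\, \prop{Q}$ from $\forall \alpha\, \prop{P}$ and from $\exists \alpha\, \neg \prop{P}$ respectively, I mimic the derivation tree of the analogous theorem in \Cref{sec:ilemeno}: assume $\neg \exists \beta\, \prop{Q}$ for contradiction; extract $\neg \forall \alpha\, \prop{P}$ and $\neg \exists \alpha\, \neg \prop{P}$; from the latter derive $\forall \alpha\, \neg\neg \prop{P}$, and by decidability of the atomic $\prop{P}$ obtain $\forall \alpha\, \prop{P}$, contradicting the former. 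This establishes $\neg\neg \exists \beta\, \prop{Q}$.

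The main obstacle is the final step: concluding $\exists \beta\, \prop{Q}$ requires Markov's principle with matrix a \emph{negative propositional} $\prop{Q}$, whereas $\marp$ here is only postulated for \emph{atomic} matrices. I would resolve this by showing that every negative propositional formula is $\HA$-provably equivalent to some atomic one: by induction on the structure of $\prop{Q}$, its Boolean value is computed by a primitive recursive function of its free variables, hence represented by a primitive recursive predicate symbol of $\mathcal{L}$, and $\HA$ proves the equivalence via its Post rules. Applying $\marp$ to this equivalent atomic matrix and transporting the witness back yields $\exists \beta\, \prop{Q}$, completing the simulation of $\emeno{1}$ in $\HA+\marp$.
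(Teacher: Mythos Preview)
Your proposal is correct and follows the same route as the paper, whose proof simply reads ``identical as the one in the previous section.'' You in fact spell out a subtlety the paper leaves implicit: since in $\HA$ the axiom $\marp$ is restricted to atomic matrices while the conclusion of $\emeno{1}$ has a negative propositional matrix $\prop{Q}$, simulating $\emeno{1}$ in $\HA+\marp$ requires first replacing $\prop{Q}$ by an $\HA$-equivalent atomic formula via the primitive-recursive coding of Boolean combinations, exactly as you describe.
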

\begin{proof}
  The proof is identical as the one in the previous section. 
\end{proof}
$\HA+\emeno{1}$ with the reduction rules in figure \cref{fig:red} enjoys the Subject Reduction Theorem \cite{ABB, Manighetti}.
\begin{theorem}[Subject Reduction]\label{subjectred3}
If $\Gamma \vdash t : C$ and $t \mapsto u$, then $\Gamma \vdash u : C$.
\end{theorem}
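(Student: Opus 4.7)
The plan is to proceed by induction on the one-step reduction relation $t \mapsto u$, performing a case analysis on the reduction rule applied at the root, exactly as one does for ordinary Curry-Howard. As is standard, the argument reduces to a handful of substitution lemmas that must be proved (or imported) beforehand.

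First I would establish the auxiliary ingredients. The ordinary substitution lemma says that if $\Gamma, x:A \vdash w:B$ and $\Gamma \vdash t:A$, then $\Gamma \vdash w[t/x]:B$, and analogously for first-order substitution $w[m/\alpha]$ when $\alpha$ is fresh for $\Gamma$. In addition, I would prove an \emph{exception substitution lemma}: if every free occurrence of $a$ in $v$ is of the form $\Wit{a}{\alpha}{\prop{P}}$ (respectively $\Hyp{a}{\alpha}{\prop{P}} m$) and $\Gamma, a:\exists\alpha^{\Nat}\prop{P}^{\bot} \vdash v:C$ (resp. $\Gamma, a:\forall\alpha^{\Nat}\prop{P} \vdash v:C$), then for every closed numeral $n$ one has $\Gamma \vdash v[a:=n]:C$. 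This is proved by induction on $v$, using that the replacements $\Wit{a}{\alpha}{\prop{P}} \leadsto (n,\True)$ and $\Hyp{a}{\alpha}{\prop{P}}n \leadsto \True$ preserve types exactly when the chosen $n$ certifies, via a Post rule, the required atomic instance $\prop{P}^{\bot}[n/\alpha]$ or $\prop{P}[n/\alpha]$ respectively. Since the reduction rules invoke $v[a:=n]$ only under the hypothesis that $\prop{P}[n/\alpha]$ is closed and equals $\False$ (equivalently $\prop{P}^{\bot}[n/\alpha] = \True$), this hypothesis is exactly what makes $(n,\True)$ typable of type $\exists\alpha^{\Nat}\prop{P}^{\bot}$ via the Post rule for $\True$.

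With these lemmas in hand, the case analysis is routine. The $\IL$ reduction rules are handled by the two standard substitution lemmas: $\beta$-reduction uses proof-term substitution; the quantifier $\beta$-rule uses first-order substitution; the projection, case, and destructor rules invert their respective introduction rules and substitute. The recursor rules $\rec\, u\, v\, 0 \mapsto u$ and $\rec\, u\, v\, (\suc n) \mapsto vn(\rec\, u\, v\, n)$ are verified by unpacking the typing of $\rec$ from the induction rule. The rule $(\Hyp{a}{\alpha}{\prop{P}})n \mapsto \True$ is correct because the left-hand side has type $\prop{P}[n/\alpha]$, which is a closed true atomic formula, and the right-hand side $\True$ has the same type by a nullary Post rule.

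The main obstacle, as expected, will be the two $\emeno{1}$-rules. The rule $\E{a}{u}{v} \mapsto u$ (when $a$ is not free in $u$) is immediate by weakening, since from $\Gamma, a:\forall\alpha^{\Nat}\prop{P} \vdash u:\exists\beta\,\prop{Q}$ we get $\Gamma \vdash u:\exists\beta\,\prop{Q}$. The delicate case is $\E{a}{u}{v} \mapsto v[a:=n]$ when $\Hyp{a}{\alpha}{\prop{P}}n$ occurs in $u$ and $\prop{P}[n/\alpha]=\False$: here I apply the exception substitution lemma to the right premise $\Gamma, a:\exists\alpha^{\Nat}\prop{P}^{\bot} \vdash v:\exists\beta\,\prop{Q}$, whose free occurrences of $a$ are by construction of the form $\Wit{a}{\alpha}{\prop{P}^{\bot}}$, to conclude $\Gamma \vdash v[a:=n]:\exists\beta\,\prop{Q}$. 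The side condition $\prop{P}[n/\alpha]=\False$ is precisely what ensures typability of the witness pair that replaces each $\Wit{a}{\alpha}{\prop{P}^{\bot}}$. Once this case is in place, Subject Reduction follows; the result can of course also be cited from \cite{ABB, Manighetti}, of which the present system is a restriction.
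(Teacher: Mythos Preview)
Your proof sketch is correct and is in fact considerably more detailed than what the paper itself offers: the paper does not prove Subject Reduction for $\HA+\emeno{1}$ at all, but simply states the theorem and cites \cite{ABB, Manighetti}, noting that the present system is a restriction of the one treated there. You even anticipate this at the end of your proposal. The inductive case analysis you outline---standard substitution lemmas for the $\IL$ and recursor rules, weakening for the first $\emeno{1}$ rule, and an exception-substitution lemma for the second---is exactly the shape of argument one finds in those references, so your approach is not different in spirit, only more explicit.

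One minor remark: in the arithmetic system the exception substitution $v[a:=n]$ is only ever invoked on the \emph{witness} side (the right branch $v$ of $\E{a}{u}{v}$), so the $\Hyp{}{}{}$-case of your exception-substitution lemma is not actually needed here; the hypothesis side is handled instead by the separate reduction $(\Hyp{a}{\alpha}{\prop{P}})n \mapsto \True$, which you treat correctly. This does not affect the correctness of your argument, only its economy.
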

No term of $\HA+\emeno{1}$ gives rise to an infinite reduction sequence \cite{AschieriCOS}.

\begin{theorem}[Strong Normalization]
Every term typable in $\HA+\emeno{1}$ is strongly normalizing.
\end{theorem}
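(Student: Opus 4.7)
The plan is to obtain this result as a direct corollary of the strong normalization theorem for the full system $\HA+\emme{1}$, established in \cite{AschieriCOS}. Indeed, $\HA+\emeno{1}$ is a syntactic subsystem of $\HA+\emme{1}$: its term grammar is a subset of the grammar of $\HA+\emme{1}$ (the conclusion of $\emeno{1}$ being merely more restrictive than that of $\emme{1}$), every typing derivation of $\HA+\emeno{1}$ is a derivation of $\HA+\emme{1}$, and every reduction rule of \Cref{fig:F} is a reduction rule of $\HA+\emme{1}$. Any infinite reduction sequence in $\HA+\emeno{1}$ would therefore yield an infinite reduction sequence in $\HA+\emme{1}$, contradicting the latter's strong normalization.

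Absent such a reduction to \cite{AschieriCOS}, the natural alternative is a Tait--Girard reducibility-candidates argument. One defines sets $\mathrm{Red}_A$ by induction on the type $A$ using the standard clauses for the intuitionistic connectives, together with the Kleene-expansion variant required to accommodate the recursor $\rec$. The novel clause concerns the $\emeno{1}$ operator: a term $\E{a}{u}{v}$ of existential type lies in $\mathrm{Red}_{\exists\beta\,\prop{Q}}$ exactly when every one of its one-step reducts lies in $\mathrm{Red}_{\exists\beta\,\prop{Q}}$, and additionally the exception substitutions $u[a:=n]$ and $v[a:=n]$ remain reducible for every numeral $n$. Adequacy, stating that every well-typed term becomes reducible once its free variables are replaced by reducible terms, is then proved by induction on the typing derivation in the usual way.

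The main obstacle is the adequacy case for the $\emeno{1}$ rule. The critical reduction $\E{a}{u}{v}\mapsto v[a:=n]$ fires only when an active raise $\Hyp{a}{\alpha}{\prop{P}}n$ occurs inside $u$ with $\prop{P}[n/\alpha]$ closed and false; it simultaneously replaces every occurrence of $\Wit{a}{\alpha}{\prop{P}^{\bot}}$ inside $v$ by the closed pair $(n,\True)$, which has the correct type $\exists\alpha^{\Nat}\,\prop{P}^{\bot}$ precisely because $\prop{P}^{\bot}[n/\alpha]$ then holds. Reducibility of $v[a:=n]$ follows from a substitution lemma for exception variables, once one verifies that $(n,\True)$ itself is reducible at type $\exists\alpha^{\Nat}\,\prop{P}^{\bot}$. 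A lexicographic termination measure on the pair consisting of the number of active raises of $a$ in $u$ and the sum of the reducibility measures of $u$ and $v$ guarantees that the recursion underlying the reducibility clause for $\E{a}{u}{v}$ is well-founded. Crucially, since the arithmetic setting dispenses with the parallelism operator $\Ez{}{u}{v}$ of \Cref{sec:ilemeno}, no permutation rules interact with $\emeno{1}$, making the direct argument markedly simpler than in the first-order case and, ultimately, an immediate consequence of \cite{AschieriCOS}.
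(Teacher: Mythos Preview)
Your first paragraph is exactly the paper's approach: the theorem is stated without proof and simply attributed to \cite{AschieriCOS}, on the grounds that $\HA+\emeno{1}$ is a restriction of $\HA+\emme{1}$. The reducibility sketch you add in the remaining paragraphs is extra material the paper does not supply, but since the intended argument is just the one-line reduction to \cite{AschieriCOS}, your proposal is correct and aligned with the paper.
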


\subsection{$\HA+\emeno{1}$ is Constructive}

We can now proceed to prove the constructivity of the system, that is the disjunction and existential properties. We will do this again by inspecting the normal forms of the proof terms; the first thing to do is adapting \Cref{prop:pnf} to $\HA+\emeno{1}$.

\begin{proposition}[Normal Form Property]\label{prop:pnf2}
Let $\prop{P},\prop{P}_1,\dots \prop{P}_n$ be negative propositional formulas, $A_1, \dots A_m$ simply universal formulas. Suppose that 
\[ \Gamma =  z_1: \prop{P}_1, \dots z_n: \prop{P}_n, a_1 : \forall \alpha_1 A_1, \dots a_m : \forall \alpha_m A_m \]
and $\Gamma \vdash t:\exists {\alpha}\, \prop{P}$ or $\Gamma \vdash t: \prop{P}$, with $t$ in normal form and having all its free variables among $z_1, \dots z_n, a_1, \dots a_m $. Then:
\begin{enumerate}

\item Every occurrence in $t$ of every  term  $\Hyp{a_{i}}{\alpha_{i}}{A_i}$ is of the active form $\Hyp{a_{i}}{\alpha_{i}}{A_i}m$, where $m$ is a  term of $\mathcal{L}$

\item $t$ cannot be of the form $u\parallel_{a} v$.
 \end{enumerate} 
\end{proposition}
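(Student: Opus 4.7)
The plan is to adapt the proof of \Cref{prop:pnf} almost verbatim, proving 1 and 2 simultaneously by induction on $t$ and splitting cases on the head of $t$ via \Cref{theorem:head-form-ha}. The main simplification is that the propositional parallel operator $\Ez{}{u}{v}$ and the constant $\Hypz{\prop{P}}$ do not exist in $\HA+\emeno{1}$, so those cases disappear; the genuinely new constructs that must be considered are $\True$, Post-rule applications $\mathsf{r}t_1\ldots t_n$, and recursor terms $\rec u v m$.

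For the introduction-rule cases $(m,u)$, $\lambda x.u$ and $\langle u,v\rangle$, as well as the constant $\True$, items 1 and 2 follow immediately from the induction hypothesis applied to the immediate subterms (whose types are still of the prescribed shape because $\prop{P}$ is negative and atomic is negative propositional). The case $t = \E{a}{u}{v}$ is ruled out by the same reasoning as in \Cref{prop:pnf}: normality plus the first $\emeno{1}$ reduction forces $a$ to occur free in $u$, and the IH applied to $u$ (which has the same existential type, with the context extended only by the simply universal assumption $a:\forall\alpha\,\prop{P}$) shows that every occurrence of $\Hyp{a}{\alpha}{\prop{P}}$ in $u$ is active. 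Because $t$ has no free first-order variables, the formula $\prop{P}$ can only have $\alpha$ free, so each active occurrence is of the form $\Hyp{a}{\alpha}{\prop{P}}n$ with $n$ a closed first-order term, making $\prop{P}[n/\alpha]$ a closed atomic formula, hence equal to $\True$ or $\False$. In either case one of the remaining two reduction rules of \cref{fig:F} fires, contradicting the normality of $t$.

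In the elimination case $t = r u_1\ldots u_k$ with $k>0$, we rule out $r$ being an introduction term, an $\E{a}{u}{v}$, a $\Wit{b}{\alpha}{\prop{P}}$, or an isolated $\Hyp{a_i}{\alpha_i}{A_i}$, exactly as in \Cref{prop:pnf}. When $r$ is a proof-term variable $z_j$ or an hypothesis $\Hyp{a_i}{\alpha_i}{A_i}$ applied to arguments, the argument is identical to the corresponding cases in \Cref{prop:pnf}; when $r = \mathsf{r}$ the term $t$ has atomic type and each $u_j$ is a proof of an atomic, hence negative propositional, formula, so the IH applies directly. The one genuinely new case, and the expected main obstacle, is $r = \rec$: here $t = \rec u v m$ at the top level, and since the free variables of $t$ are confined to $z_1,\ldots,z_n,a_1,\ldots,a_m$, the first-order term $m$ is closed; in the language of $\HA$ this forces $m$ to be a numeral, making $\rec u v m$ a redex and contradicting the normality of $t$. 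This excludes the recursor case entirely and completes the induction.
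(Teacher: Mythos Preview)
Your proposal is correct and follows essentially the same approach as the paper: the paper's proof simply states it is ``identical to the proof of \Cref{prop:pnf}'' and then adds the two new cases $t=\mathsf{r}\,t_1\ldots t_n$ (IH on the atomic premises) and $t=\rec\,t_1\ldots t_n$ (impossible because the third argument is a numeral, giving a redex), exactly as you do. Your treatment of the $\E{a}{u}{v}$ case is in fact more careful than the paper's: you correctly observe that, unlike in $\IL+\emeno{1}$, the $\HA$ reductions in \cref{fig:F} only fire when $\prop{P}[n/\alpha]$ is closed, and you supply the missing justification (no free first-order variables in $t$, hence $n$ closed and $\prop{P}[n/\alpha]$ a closed atomic formula, hence decidably $\True$ or $\False$), a point the paper leaves implicit when it declares the argument ``identical''.
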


\begin{proof} 
The proof is identical to the proof of \Cref{prop:pnf}. We just need to consider the following additional cases:
\begin{itemize}
\item $t=\mathsf{r} t_1 t_2 \dots t_n$. Then $\Gamma \vdash t_i : \prop{Q}_i$ for some atomic $\prop{Q}_i$ and for $i=1 \dots n$; 1. holds by applying the inductive hypothesis to the $t_i$, while 2. is obviously verified.

\item $t = \rec t_1 \dots t_n$. This case is not possible, otherwise since $t_{3}$ is a numeral and thus $t$ would not be in normal form.
\end{itemize}
\end{proof}

Thanks to this, we can now state the main theorem. The proof of the existential property is the same as the one for \Cref{theorem-extraction}: we just need to observe that since we don't have a parallelism operator in $\HA+\emeno{1}$, every Herbrand disjunction will consist of a single term. The disjunction property will follow similarly.
\begin{theorem}[Constructivity of $\HA+\emeno{1}$] \mbox{}
  \begin{itemize}
  \item If $\HA+\emeno{1} \vdash t : \exists \alpha A$, then there exists a term $t' = (n,u)$ such that $t\mapsto^{*}t'$ and $\HA+\emeno{1} \vdash u : A[n/\alpha]$
  \item If $\HA+\emeno{1} \vdash t : A \lor B$, then there exists a term $t'$ such that $t\mapsto^{*}t'$ and either $t'=\inj{0}{u}$ and $\HA+\emeno{1} \vdash u : A$, or $t'=\inj{1}{u}$ and $\HA+\emeno{1} \vdash u: B$
  \end{itemize}
\end{theorem}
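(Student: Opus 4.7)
The plan is to follow the pattern of the Herbrand Disjunction Extraction theorem (\Cref{theorem-extraction}) from the first-order setting, exploiting the simplification that no parallelism operator $\Ez{}{u}{v}$ is part of the term grammar of $\HA + \emeno{1}$: the witness-term of a provable existential must therefore reduce to a single pair $(n,u)$ rather than to an Herbrand list.

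First, apply Strong Normalization to the closed proof term $t$ to obtain a normal form $t'$ with $t \mapsto^{*} t'$, and use Subject Reduction (\Cref{subjectred3}) to get $\vdash t' : \exists \alpha\, A$. Then invoke the head characterization (\Cref{theorem:head-form-ha}): since $\exists \alpha\, A$ is neither an implication nor a universal formula, no outer $\lambda$-abstractions are present and $t'$ must be of the shape $r\, u_1 \dots u_k$.

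Next, I perform case analysis on $r$. Variables are ruled out because $t'$ is closed; $\Hyp{a}{\alpha}{\prop{P}}$ and $\Wit{a}{\alpha}{\prop{P}}$ would leave $a$ free; $\mathsf{r}$ has atomic type, not existential; $\rec\, u_1\, u_2\, m$ in normal form would force $m$ to be a numeral, triggering a recursor reduction. The critical case is $r = \E{a}{u}{v}$: by $\EM_{1}^{-}$ this subterm has type $\exists \beta\, \prop{Q}$ for some negative propositional $\prop{Q}$, and being itself a closed normal term of that type, the adapted Normal Form Property \Cref{prop:pnf2}, clause 2, applied to $r$ rules it out. The only remaining option is that $r$ is an introduction form, and the type $\exists \alpha\, A$ forces $r = (n,u)$; moreover $k=0$ (else there would be a redex), so $t' = (n,u)$ with $\vdash u : A[n/\alpha]$ read off from the typing of $r$.

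The disjunction case is entirely analogous: the same case analysis forces $r$ to be one of the introductions $\inj{0}{u}$ or $\inj{1}{u}$, with $k=0$. The main obstacle, as in the first-order case, is the elimination of the $\E{a}{u}{v}$ subcase, for which the adapted normal form property \Cref{prop:pnf2} is essential; in contrast to the first-order setting, however, no disjunctive list of witnesses arises because no parallel operator $\Ez{}{u}{v}$ can be produced by reduction.
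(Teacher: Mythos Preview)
Your proposal is correct and follows essentially the same approach as the paper's own proof: normalize, invoke Subject Reduction, apply the head characterization, and perform the same case analysis on $r$, with the crucial $\E{a}{u}{v}$ case eliminated via clause~2 of \Cref{prop:pnf2}. The only cosmetic difference is that you spell out explicitly why no outer $\lambda$-abstractions occur, whereas the paper leaves this implicit.
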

\begin{proof}\mbox{}
For both cases, we start by considering a term $t'$ such that $t \mapsto^* t'$ and $t'$ is in normal form. By the Subject Reduction \Cref{subjectred} we have that $\HA+\emeno{1} \vdash t' : \exists \alpha A$ (resp. $\HA+\emeno{1} \vdash t': A \lor B$). By \Cref{theorem:head-form-em} we can write $t'$ as $r t_1 \dots t_n$. Since $t'$ is closed, $r$ cannot be a variable $x$ or a term $\Hyp{a}{\alpha}{\prop{P}}$ or $\Wit{a}{\alpha}{\prop{P}}$; moreover it cannot be $\mathsf{r}$, otherwise the type of $t'$ would have to be atomic, and it cannot be $\rec$, otherwise the term would not be in normal form.  $r$ also cannot have been obtained by $\emeno{1}$, otherwise $\HA+\emeno{1} \vdash r : \exists \alpha\prop{P}$, for $\prop{P}$ atomic and $r=\E{a}{t_1}{t_2}$; but this is not possible due to \Cref{prop:pnf2}. Therefore, $r$ must be obtained by an introduction rule. We distinguish now the two cases:
  \begin{itemize}
  \item $\HA+\emeno{1} \vdash t' : \exists \alpha B$. Since the term is in normal form, $n$ has to be 0, that is $t'=r$ and  $r=\enc{n}{u}$; hence also $\HA+\emeno{1} \vdash u : A(n)$.
  \item $\HA+\emeno{1} \vdash t' : A \lor B$.  Then either $t'=\inj{0}{u}$, and so $\HA+\emeno{1} \vdash u : A$, or $t'=\inj{1}{u}$, and so $\HA+\emeno{1} \vdash u: B$.
  \end{itemize}
\end{proof}


\begin{thebibliography}{[1]}

{





\bibitem{AschieriCOS} F. Aschieri, \emph{Strong Normalization for HA + EM1 by Non-Deterministic Choice}, Proceedings of First Workshop on Control Operators and their Semantics 2013 (COS 2013), Electronic Proceedings in Theoretical Computer Science, vol. 127, pp. 1--14, 2013.


\bibitem{ABB} F. Aschieri, S. Berardi, G. Birolo, 
\emph{Realizability and Strong Normalization for a Curry-Howard Interpretation of HA + EM1}, CSL 2013, Leibniz International Proceeding in Computer Science, vol. 23, pp. 45--60, 2013.



\bibitem{AZMarkov} F. Aschieri, M. Zorzi, \emph{A ``Game Semantical'' Intuitionistic Realizability Validating Markov's Principle}, Post-Proceedings of TYPES 2013, Leibniz International Proceedings in Informatics, vol. 26, pp. 24--44, 2014.



\bibitem{AschieriZH} F. Aschieri, M. Zorzi, \emph{On Natural Deduction in Classical First-Order Logic: Curry-Howard Correspondence, Strong Normalization and Herbrand's Theorem}, Theoretical Computer Science, vol. 625, pp. 125--146, 2016.

 


\bibitem{Buss}S. Buss, \emph{On Herbrand's Theorem}, Proceedings of  Logic and Computational Complexity, LNCS, vol. 960, pp. 195--209, 1995. 











\bibitem {Girard}
J.-Y. Girard and Y. Lafont and P. Taylor,
\emph{Proofs and Types}.
Cambridge University Press
{\bfseries } 1989.



\bibitem{Godeldialectica} K. G\"odel, \emph{Uber eine bisher noch nicht benutzte Erweiterung des finiten Standpunktes}, Dialectica 12, pp. 280-287, 1958.








\bibitem{Herbelin} H. Herbelin, \emph{An Intuitionistic Logic that Proves Markov's Principle}, Proceedings of LICS 2010, pp. 50--56, 2010.




\bibitem{Kleene} S. C. Kleene,
\emph{On the Interpretation of Intuitionistic Number Theory},
Journal of Symbolic Logic vol. 10, n. 4, pp. 109--124, 1945.



\bibitem{Kreiselm} G. Kreisel, \emph{On Weak Completeness of Intuitionistic Predicate Logic}, Journal of Symbolic Logic, vol. 27, n. 2, pp. 139--158, 1962.













\bibitem{Manighetti} M. Manighetti, \emph{Computational Interpretations of Markov's Principle}, Wien, Techn. Univ., Dipl.-Arb., 2016, \href{https://arxiv.org/abs/1611.03714}{arXiv:1611.03714}.
 




\bibitem{Prawitz}
D. Prawitz, \emph{Ideas and Results in Proof Theory}, Proceedings of the Second Scandinavian Logic Symposium, pp. 235--306,1971.
\bibitem{Sorensen}M. H. Sorensen, P. Urzyczyn, \emph{Lectures on the Curry-Howard isomorphism}, Studies in Logic and the Foundations of Mathematics, vol. 149, Elsevier, 2006.
}



\bibitem{Troelstra} \hspace{-1.5ex} A. Troelstra, \emph{Metamathematical Investigations of Intuitionistic Arithmetic and Analysis}, Lectures Notes in Mathematics, {344}, Springer-Verlag, 1973.

\bibitem{Troelstra2}  A. Troelstra, D. van Dalen:  \emph{Constructivism in Mathematics Volume I}, North-Holland, 1988.


\end{thebibliography}
\end{document}